\newtheorem{theorem}{Theorem}
\newtheorem{assumption}{Assumption}
\newtheorem{lemma}{Lemma}
\newtheorem{definition}{Definition}
\definecolor{darkblue}{rgb}{0, 0, 0.5}
\begin{document}

\title{Detecting $p$-hacking\thanks{We are grateful to Brendan Beare, Gregory Cox, Bulat Gafarov, Xinwei Ma, Ulrich M\"uller, Christoph Rothe, Yixiao Sun, the Editor (Guido Imbens), anonymous referees, seminar participants at the National University of Singapore, the University of Cambridge, the University of Illinois at Urbana-Champaign, the University of Mannheim, and conference participants at the California Econometrics Conference 2019, the CEME Conference for Young Econometricians 2019, and the 2019 SEA End-Of-Year Conference for valuable comments. K.W. is also affiliated with CESifo and ifo Institute. The usual disclaimer applies.}}

\author{Graham Elliott\thanks{Department of Economics, University of California, San Diego, 9500 Gilman Dr. La Jolla, CA 92093. Email: \url{grelliott@ucsd.edu}}  \quad \quad Nikolay Kudrin\thanks{Department  of Economics, University of California, San Diego,  9500 Gilman Dr.\ La Jolla, CA 92093. Email: \url{nkudrin@ucsd.edu}}  \quad \quad Kaspar W\"uthrich\thanks{Department  of Economics,  University of California, San Diego,  9500 Gilman Dr.\ La Jolla, CA 92093. Email: \url{kwuthrich@ucsd.edu}}}

\maketitle


\begin{abstract} 

We theoretically analyze the problem of testing for $p$-hacking based on distributions of $p$-values across multiple studies. We provide general results for when such distributions have testable restrictions (are non-increasing) under the null of no $p$-hacking. We find novel additional testable restrictions for $p$-values based on $t$-tests. Specifically, the shape of the power functions results in both complete monotonicity as well as bounds on the distribution of $p$-values. These testable restrictions result in more powerful tests for the null hypothesis of no $p$-hacking. When there is also publication bias, our tests are joint tests for $p$-hacking and publication bias. A reanalysis of two prominent datasets shows the usefulness of our new tests.  

\bigskip

\noindent \textbf{Keywords:} $p$-values, $p$-curve, complete monotonicity, publication bias

\end{abstract}

\newpage

\section{Introduction}
A researcher's ability to explore various ways of analyzing and manipulating data and then selectively report the ones that yield better-looking results, commonly referred to as \emph{$p$-hacking}, compromises the reliability of research and undermines the scientific credibility of reported results. 
Absent systematic replication studies or meta analyses, a popular approach for assessing the extent of $p$-hacking is to examine distributions of $p$-values across studies, referred to as \emph{$p$-curves} \citep{simonsohn2014p}; see Section 2 in \citet{christensen2018transparency} for a review.\footnote{Examples include: \citet{masicampo2012peculiar}, \citet{leggett2013life}, \citet{simonsohn2014p,simonsohn2015better}, \citet{head2015extent}, \citet{dewinter2015surge}, and \citet{snyder2018sniff}. Another strand of the literature uses the distribution of $t$-statistics to test for $p$-hacking \citep[e.g.,][]{gerber2008do,brodeur2016,brodeur2020methods,bruns2019reporting,vivalt2019specification}.}

We consider the problem of testing the \emph{null hypothesis of no $p$-hacking} against the \emph{alternative hypothesis of $p$-hacking} and provide theoretical foundations for developing tests for $p$-hacking. We characterize analytically under general assumptions the null set of distributions of $p$-values implied in the absence of $p$-hacking and provide general sufficient conditions under which, for any distribution of the true effects, the $p$-curve is non-increasing and continuous in the absence of $p$-hacking. These conditions are shown to hold for many, but not all popular approaches to testing for effects. 

For the leading case where $p$-curves are based on $t$-tests, we derive additional previously unknown testable restrictions. Specifically, the $p$-curves based on $t$-tests are completely monotone in the absence of $p$-hacking, and their magnitude and the magnitude of their derivatives are restricted by upper bounds. These restrictions are particularly useful when $p$-hacking fails to induce an increasing $p$-curve---for example when researchers engage in specification search across independent tests. In such cases tests based on non-increasingness have no power. 

Our theoretical results allow us to develop more powerful statistical tests for $p$-hacking, which we apply to two large datasets of $p$-values.
We find evidence for $p$-hacking in settings where the existing tests do not reject the null of no $p$-hacking.

When there is publication bias, our results characterize the $p$-curve under the null hypothesis of \emph{no $p$-hacking and no publication bias}. Our tests become joint tests for $p$-hacking and publication bias, complementing available methods for identifying  publication bias \citep[see, e.g.,][and the references therein]{andrews2019identification}.

\section{The p-curve based on general tests}
\label{sec:setup1}
Here we provide general sufficient conditions under which the $p$-curve is non-increasing under the null hypothesis of no $p$-hacking. These results are useful because tests for $p$-hacking often assume non-increasingness of the $p$-curve \citep[e.g.,][]{simonsohn2014p,simonsohn2015better,head2015extent}. 
This assumption has been justified through analytical and numerical examples, which rely on specific choices of tests and distributions of true effects being tested \citep[e.g.,][]{hung1997,simonsohn2014p,ulrich2018some}. However, such analyses are not sufficient for guaranteeing size control of statistical tests for $p$-hacking since the true effect distribution is never known. Instead, what is required for size control in a wide range of applications is a characterization of the shape of the $p$-curve for general tests and effect distributions.

\subsection{Setup}

Consider a test statistic $T$ that is distributed according to a distribution with cumulative distribution function (CDF) $F_{h}$, where $h$ indexes parameters of either the exact or asymptotic distribution of the test. We assume that the parameters $h$ only contain the parameters of interest. This is suitable for settings with large enough samples and asymptotically pivotal test statistics, which are prevalent in applied research.

Suppose researchers are testing the hypothesis
\begin{equation}
H_0: h \in \mathcal{H}_0 \qquad \text{against} \qquad  H_1: h \in \mathcal{H}_1  \label{eq:testing_problem},
\end{equation} 
where $ \mathcal{H}_0 \cap \mathcal{H}_1 =\emptyset$. Let $ \mathcal{H} = \mathcal{H}_0 \cup \mathcal{H}_1 $.
Denote as $F$ the CDF of the chosen null distribution from which critical values are determined.  
We assume that the test rejects for large values of the test statistic and denote the critical value for a level $p$ test as $cv(p)$. We will focus on settings with a continuous and strictly increasing $F$ (see Assumption \ref{ass:regularity} below) and set $cv(p)=F^{-1}(1-p)$. For any $h$, we denote by $\beta\left(p,h \right) = \Pr\left(T>cv(p)\mid h\right)=1-F_{h}\left(cv(p)\right)$  the rejection rate of a level $p$ test with parameters $h$. For $ h \in \mathcal{H}_1$, this is the power of the test, and we refer to $\beta(p,h)$ as the \emph{power function}. 

For the remainder of the paper, we focus on settings where the tests generating the $p$-values satisfy Assumption \ref{ass:regularity}.  This allows us to work with a well-defined density function and provide general results. 
\begin{assumption}[Regularity] \label{ass:regularity} $F$ and $F_h$ are twice continuously differentiable with uniformly bounded first and second derivatives $f, f', f_h$ and $f_h'$. $f(x)>0$ for all $x\in \{cv(p): p\in (0,1)\}$. For $h\in \mathcal{H}$, $\text{supp}(f) = \text{supp}(f_h)$.\footnote{For a function $\varphi$, we define $\text{supp}(\varphi)$ to be the closure of $\{x: \varphi(x)\ne0\}$.}
\end{assumption}
Assumption \ref{ass:regularity} holds for many tests with parametric $F$ and $F_h$, including $t$-tests and Wald-tests. A necessary condition for Assumption \ref{ass:regularity} is the absolute continuity of $F$ and $F_h$. This is not too restrictive since, in many cases, $F$ and $F_h$ are the asymptotic distributions of test statistics, which typically satisfy this condition. Further, in cases where the test statistics have a discrete distribution, size does not typically equal level, which could lead to $p$-curves that violate non-increasingness.

Consider the distribution of the $p$-values across studies, where we compute $p$-values from a distribution of $T$ given values of $h$, which themselves are drawn from a probability distribution $\Pi$. We refer to $\Pi$ as the \emph{distribution of true effects}. The CDF of the $p$-values is 
\begin{eqnarray}
G(p)=\int_\mathcal{H}\Pr\left(T>cv(p)\mid h\right)d\Pi(h)=\int_\mathcal{H}\beta\left(p,h\right) d\Pi(h).\label{eq: CDF p-values}
\end{eqnarray}
Under Assumption \ref{ass:regularity}, define the $p$-curve as follows.
\begin{definition}[$P$-curve] The density of the $p$-values, the $p$-curve, is defined as
\begin{equation*}
g(p):=\int_\mathcal{H}\frac{\partial \beta\left(p,h\right)}{\partial p} d\Pi(h).
\end{equation*}
\end{definition}
In Section \ref{sec:p_curve_absence}, we analyze the shape of $g$ for general tests and distributions $\Pi$.

\subsection{Properties of p-curves based on general tests}
\label{sec:p_curve_absence}
Here we derive conditions under which the $p$-curve is non-increasing in the absence of $p$-hacking for any distribution of true effects. We show that this property holds for most but not all popular statistical tests. 

Under Assumption \ref{ass:regularity}, the curvature of the $p$-curve follows from   
\begin{equation*}
g'(p):=\frac{d g(p)}{d p}=\int_\mathcal{H}\frac{\partial^2 \beta\left(p,h\right)}{\partial p^2} d\Pi(h).
\end{equation*}
The sign of $g'(p)$ is determined by the second derivative of the rejection probability, $\partial^2 \beta\left(p,h\right)/\partial p^2$. As we will show in the proof of Theorem \ref{thm:monotonicity} below, the following condition implies that $\partial^2 \beta\left(p,h\right)/\partial p^2$ is non-positive for all $h\in \mathcal{H}$. 
\begin{assumption}[Sufficient condition] \label{ass:sufficient_condition} For all $(x,h)\in \{cv(p): p\in (0,1)\}\times \mathcal{H}$, 
\[
f'_{h}(x)f(x)\ge f'(x)f_{h}(x).
\]
\end{assumption}

Assumption \ref{ass:sufficient_condition} is a restriction on how the power function changes when the critical value changes, which is governed by the shape of the density.
When $\mathcal{H}_0=\{0\}$ and $F=F_0$ (as, for example, for one-sided $t$-tests), Assumption \ref{ass:sufficient_condition} is of the form of a monotone likelihood ratio property, which relates the shape of the density of $T$ under the null to the shape of the density of $T$ under alternative $h$. The next lemma shows that this condition holds for many popular tests. Let $\Phi$ denote the CDF of the standard normal distribution.

\begin{lemma}\label{lem:verification_sufficient_condition} Assumption \ref{ass:sufficient_condition} holds when
\begin{enumerate}[label=(\roman*)]\setlength\itemsep{-1.5pt}
\item $F(x)=\Phi(x)$, $F_h(x)=\Phi(x-h)$, $\mathcal{H}_0=\{0\}$, $\mathcal{H}_1\subseteq (0,\infty)$ (e.g., similar one-sided $t$-test)
\item $F$ is the CDF of a half-normal distribution with scale parameter $1$, $F_h$ is the CDF of a folded normal distribution with location parameter $h$ and scale parameter $1$, $\mathcal{H}_0=\{0\}$, $\mathcal{H}_1\subseteq \mathbb{R}\backslash \{0\}$ (e.g., two-sided $t$-test)
\item $F$ is the CDF of a $\chi^2$ distribution with degrees of freedom $d>0$, $F_h$ is the CDF of a noncentral $\chi^2$ distribution with degrees of freedom $d>0$ and noncentrality parameter $h$, $\mathcal{H}_0=\{0\}$, $\mathcal{H}_1\subseteq (0, \infty)$ (e.g., Wald test\footnote{
For instance, let $\sqrt{N}(\hat{\theta}-\theta)\overset{a}\sim \mathcal{N}(0, V)$, where $\hat\theta$ is an estimator of $\theta$ based on $N$ observations and $V\in \mathbb{R}^{\dim(\theta)\times \dim(\theta)}$ is known (or can be consistently estimated). Consider the problem of testing $H_0: R\theta=r$ against $H_1: R\theta\ne r$, where $R\in \mathbb{R}^{q\times \dim(\theta)}$, $r\in\mathbb{R}^q$, and $rank(R)=q$. Set $T=N(R\hat\theta-r)'(RVR')^{-1}(R\hat\theta-r)$. This fits our framework with $d=q$ and  $h:=\lambda'(RVR')^{-1}\lambda$, where $\lambda:=\sqrt{N}(R\theta-r)$.})
\end{enumerate}
\end{lemma}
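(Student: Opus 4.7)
The plan is to first reformulate Assumption \ref{ass:sufficient_condition} as the statement that the likelihood ratio $f_h(x)/f(x)$ is non-decreasing in $x$ on the relevant range, since
\[
\frac{d}{dx}\frac{f_h(x)}{f(x)} = \frac{f'_h(x)f(x) - f'(x)f_h(x)}{f(x)^2},
\]
and $f(x)>0$ on $\{cv(p):p\in(0,1)\}$ by Assumption \ref{ass:regularity}. So the lemma reduces to verifying the monotone likelihood ratio (MLR) property in each of the three cases, on the range of $x$ over which critical values are taken.

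For case (i), I would compute the ratio directly: $\phi(x-h)/\phi(x) = \exp(hx - h^2/2)$, which is increasing in $x$ for every $h>0$. For case (ii), the half-normal density is $f(x)=2\phi(x)\mathbf{1}\{x\ge 0\}$ and the folded normal density is $f_h(x)=(\phi(x-h)+\phi(x+h))\mathbf{1}\{x\ge 0\}$, so
\[
\frac{f_h(x)}{f(x)} = \tfrac{1}{2}e^{-h^2/2}\bigl(e^{hx}+e^{-hx}\bigr) = e^{-h^2/2}\cosh(hx), \qquad x\ge 0,
\]
whose derivative $h\,e^{-h^2/2}\sinh(hx) = |h|e^{-h^2/2}\sinh(|h|x)\ge 0$ for $x\ge 0$ and any $h\ne 0$.

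The main obstacle is case (iii): the noncentral $\chi^2_d(h)$ density has a modified-Bessel-function form, so the ratio is not transparent. My approach is to use the Poisson mixture representation: if $K\sim\mathrm{Poisson}(h/2)$ and $X\mid K\sim\chi^2_{d+2K}$, then $X\sim\chi^2_d(h)$. Writing $f_h(x) = \sum_{k\ge 0} p_k(h)\, g_k(x)$ with $p_k(h)=e^{-h/2}(h/2)^k/k!$ and $g_k$ the central $\chi^2_{d+2k}$ density, and computing
\[
\frac{g_k(x)}{f(x)} = \frac{\Gamma(d/2)}{2^k\,\Gamma(d/2+k)}\,x^k,
\]
one obtains
\[
\frac{f_h(x)}{f(x)} = \Gamma(d/2)\sum_{k\ge 0} \frac{p_k(h)}{2^k\,\Gamma(d/2+k)}\, x^k,
\]
a power series in $x$ with strictly positive coefficients (for $h>0$), hence strictly increasing on $x>0$. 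This yields MLR on the support of critical values, completing case (iii).

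Finally, I would note that in each case $F$ and $F_h$ are the distributions appearing in Lemma \ref{lem:verification_sufficient_condition}, so the range $\{cv(p):p\in(0,1)\}$ is contained in the set on which MLR has been verified, and thus $f'_h(x)f(x)\ge f'(x)f_h(x)$ holds pointwise there as required. The only genuine calculation is the Poisson-mixture reduction in (iii); cases (i) and (ii) are one-line verifications once the MLR reformulation is in place.
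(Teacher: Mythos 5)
Your proposal is correct and follows essentially the same route as the paper: cases (i) and (ii) are the same direct computations, and case (iii) uses the same Poisson-mixture representation of the noncentral $\chi^2$ density, with your power-series derivative of $f_h/f$ being exactly the paper's expression for $f'_hf-f'f_h$ divided by $f^2$. The only cosmetic difference is that you package the verification as a monotone likelihood ratio statement (which is valid since $f>0$ on the range of critical values), whereas the paper checks the sign of the difference $f'_hf-f'f_h$ directly.
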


The following theorem shows that the $p$-curve is non-increasing and continuously differentiable under the maintained assumptions for any distribution of true effects.

\begin{theorem}[Testable restrictions for general tests]\label{thm:monotonicity} Under Assumptions \ref{ass:regularity}--\ref{ass:sufficient_condition}, $g$ is continuously differentiable and $g'(p)\le 0$ for $p\in (0,1)$. 
\end{theorem}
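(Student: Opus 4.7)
The plan is to compute $\partial^2\beta(p,h)/\partial p^2$ explicitly, show that Assumption \ref{ass:sufficient_condition} forces it to be non-positive pointwise in $h$, and then move the derivative inside the integral defining $g(p)$ using the uniform bounds from Assumption \ref{ass:regularity}.

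First, I would differentiate $\beta(p,h) = 1 - F_h(cv(p))$. Since $cv(p) = F^{-1}(1-p)$ and $f$ is strictly positive on the relevant set, $cv'(p) = -1/f(cv(p))$, so the chain rule gives
\[
\frac{\partial \beta(p,h)}{\partial p} = \frac{f_h(cv(p))}{f(cv(p))}.
\]
Differentiating once more and using $cv'(p) = -1/f(cv(p))$ yields, with $x := cv(p)$,
\[
\frac{\partial^2 \beta(p,h)}{\partial p^2} = -\frac{f_h'(x) f(x) - f'(x) f_h(x)}{f(x)^3}.
\]
Assumption \ref{ass:sufficient_condition} says the numerator is non-negative on $\{cv(p):p\in(0,1)\}\times\mathcal{H}$, and by Assumption \ref{ass:regularity} the denominator is strictly positive, so $\partial^2\beta(p,h)/\partial p^2\le 0$ for every $h\in\mathcal{H}$ and $p\in(0,1)$.

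Next I would justify pulling the derivative inside the integral that defines $g$. By Assumption \ref{ass:regularity}, the quantities $f, f', f_h, f_h'$ are uniformly bounded and $f$ is bounded away from zero on any compact subset of $\{cv(p):p\in(0,1)\}$; hence the family $\{\partial^2\beta(p,h)/\partial p^2\}_{h\in\mathcal{H}}$ is dominated, uniformly in $h$ and locally uniformly in $p$, by an integrable (in fact bounded) function. Dominated convergence then yields that $g$ is differentiable with
\[
g'(p) = \int_{\mathcal{H}} \frac{\partial^2 \beta(p,h)}{\partial p^2}\, d\Pi(h) \le 0,
\]
and the same domination argument gives continuity of $g'$.

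I expect the only delicate point to be verifying the dominated-convergence hypothesis carefully: one needs that $1/f(cv(p))$ stays bounded on compact $p$-subintervals of $(0,1)$ (which follows from $f>0$ on $\{cv(p):p\in(0,1)\}$ together with continuity), and that the bound on the numerator $f_h' f - f' f_h$ is $h$-uniform, which is exactly what the ``uniformly bounded'' clause of Assumption \ref{ass:regularity} supplies. The sign statement itself is an immediate consequence of Assumption \ref{ass:sufficient_condition} once the two derivatives have been computed.
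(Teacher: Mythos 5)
Your proposal is correct and follows essentially the same route as the paper's proof: compute $\partial^2\beta(p,h)/\partial p^2 = \frac{cv'(p)}{f(cv(p))^2}\left[f_h'(cv(p))f(cv(p)) - f'(cv(p))f_h(cv(p))\right]$, note that $cv'(p)/f(cv(p))^2 \le 0$ while the bracketed term is non-negative by Assumption \ref{ass:sufficient_condition}, and invoke Assumption \ref{ass:regularity} for the interchange of differentiation and integration. Your explicit dominated-convergence justification is a more careful spelling-out of what the paper dispatches with ``Continuous differentiability is implied by Assumption \ref{ass:regularity}.''
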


The result in Theorem \ref{thm:monotonicity} holds for many commonly-used statistical tests such that, in many empirically relevant settings, the $p$-curve will be non-increasing in the absence of $p$-hacking. To our knowledge, Theorem \ref{thm:monotonicity} provides the first general formal justification for the existing tests for $p$-hacking that exploit non-increasingness of the $p$-curve. Theorem \ref{thm:monotonicity} further motivates the use of density discontinuity tests as an alternative to tests based on non-increasingness of the $p$-curve. 

The results can be extended to settings with nuisance parameters. In such settings, $h$ contains both the parameters of interest, $h_1$, as well as additional nuisance parameters, $h_2$, such that $h=(h_1,h_2)$. Let $\mathcal{H}^1$ and $\mathcal{H}^2$ denote the supports of $h_1$ and $h_2$. Allow the null distribution to depend on $h_2$ with CDF $F_{h_2}$. 
The CDF of $p$-values becomes
\begin{eqnarray*}
G(p)=\int_{\mathcal{H}^1\times \mathcal{H}^2}\beta\left(p,h_1,h_2\right) d\Pi(h_1,h_2),
\end{eqnarray*}
where $\beta(p,h_1,h_2)=1-F_{h}\left(cv_{h_2}(p)\right)$ and $cv_{h_2}(p)=F_{h_2}^{-1}(1-p)$. 
The results of Theorem \ref{thm:monotonicity} extend to the $p$-curve generated from this distribution after changing the notation to include the dependence on $h_2$. For $h_2\in \mathcal{H}^2$, $F_{h_2}$, $f_{h_2}, f_{h_2}'$ have the same properties as $F$, $f, f'$ in Assumption \ref{ass:regularity}, and the assumptions on $F_{h}$, $f_{h}, f_{h}'$ hold for $h=(h_1,h_2)$. Assumption \ref{ass:sufficient_condition} becomes $f'_{h}(cv_{h_2}(p))f_{h_2}(cv_{h_2}(p))\ge f_{h_2}'(cv_{h_2}(p))f_{h}(cv_{h_2}(p))$ for $(h_1,h_2)\in \mathcal{H}^1\times \mathcal{H}^2$. The proof then follows directly from that of Theorem \ref{thm:monotonicity}. 

In applications, often only a part of the $p$-curve is examined. 
The $p$-curve over subintervals $\mathcal{I}\subset (0,1)$ is given by $g_{\mathcal{I}}(p)=g(p)/\int_{\mathcal{I}}g(p)dp$ for $p\in \mathcal{I}$. Therefore, the results extend directly to this situation. 
Moreover, the $p$-curve constructed from a finite aggregation of different tests satisfying the assumptions of Theorem \ref{thm:monotonicity} is continuously differentiable and non-increasing. 

The assumptions of Theorem \ref{thm:monotonicity} directly suggest $p$-curves for which the results of Theorem \ref{thm:monotonicity} fail. For example, when the tests are non-similar, the $p$-curve can be non-monotonic in the absence of $p$-hacking, which arises through a violation of Assumption \ref{ass:sufficient_condition}. 
To illustrate, consider testing $H_0:h\le 0$ against $H_1:h> 0$ using a (non-similar) one-sided $t$-test, where $f$ is the density of the $\mathcal{N}(0,1)$ distribution and $f_{h}$ is the density of the $\mathcal{N}(h,1)$ distribution. It follows that $f'(x)/f(x)=-x$ and $f'_{h}(x)/f_{h}(x)=-(x-h)$, such that Assumption \ref{ass:sufficient_condition} holds when $h \ge 0$ but is violated when $h<0$.
Thus, when the weight in $\Pi$ on $h<0$ is large enough, the $p$-curve can be non-monotonic or increasing. For example, suppose that $\Pi$ is a normal distribution with mean $\mu$ and variance $1$, which places some mass on $h<0$, mixing increasing and decreasing $p$-curves. Figure \ref{fig:p_curves_non_similar} shows that the resulting $p$-curve is non-increasing when $\mu=0$ and non-monotonic when $\mu=-2.5$. 
\begin{figure}[ht]
     \centering
         \includegraphics[width=0.45\textwidth,trim=0 .5cm 0 0]{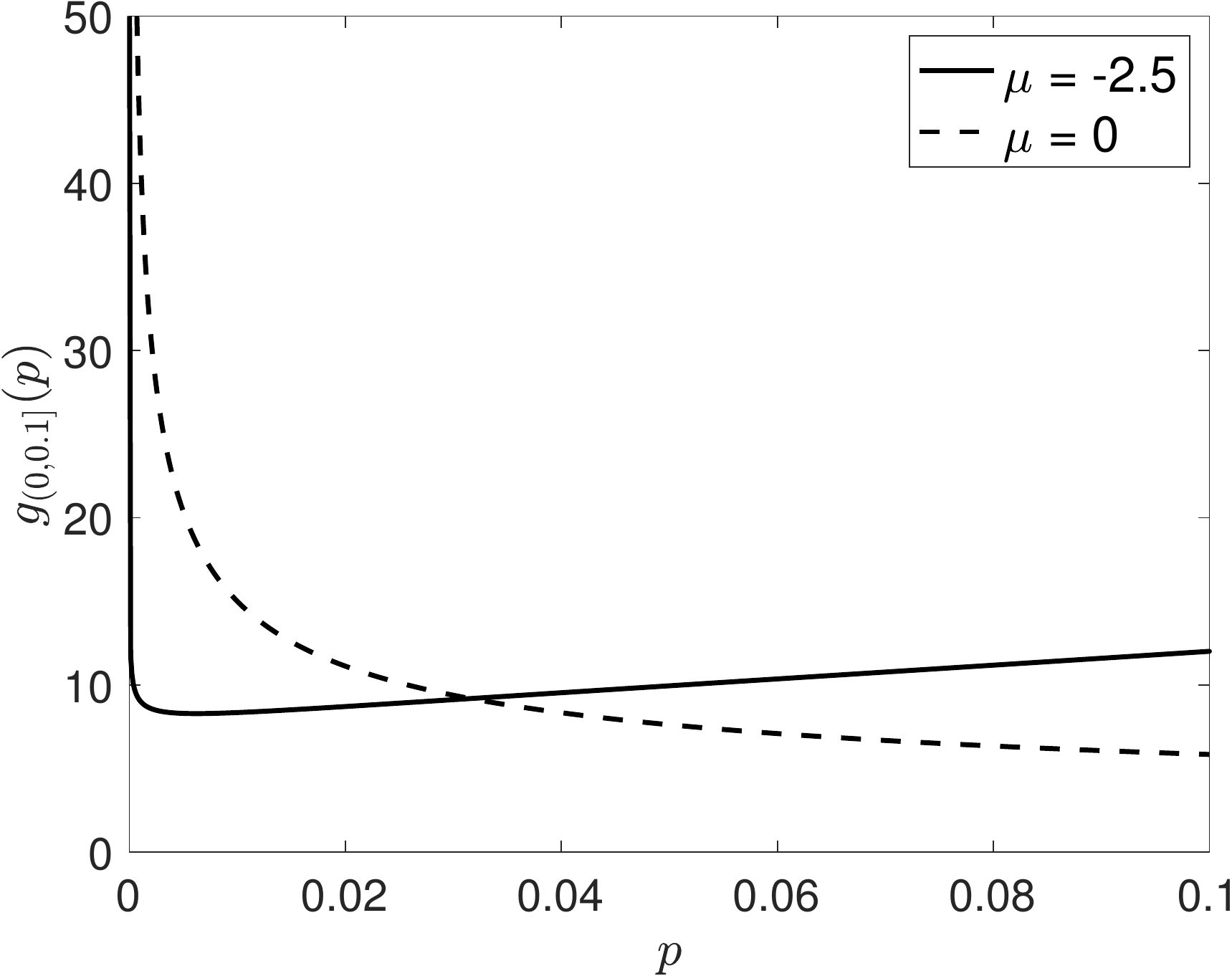}
              \caption{$P$-curves based on non-similar one-sided $t$-tests on $(0,0.1]$. The distribution of true effects $\Pi$ is a normal distribution with mean $\mu$ and variance $1$.}
         \label{fig:p_curves_non_similar}
\end{figure}

\section{The p-curve based on t-tests}
\label{sec:ttestspcurve}

We now show that for the leading case where $p$-curves are generated from $t$-tests with exact or asymptotic normal distributions, there are additional previously unknown testable restrictions. These restrictions allow us to develop more powerful statistical tests for $p$-hacking (see Section \ref{sec: tests t-tests}). In particular, these tests have power in situations where $p$-hacking does not lead to a violation of non-increasingness.

Consider first the problem of testing a one-sided hypothesis 
\begin{eqnarray}
H_0: h=0 \qquad \text{against} \qquad  H_1: h>0, \label{eq:one_sided_t}
\end{eqnarray}
where $h$ is a scalar, $\mathcal{H}_0=\{0\}$, and $\mathcal{H}_1=(0,\infty)$. We assume that $T\sim \mathcal{N}(h,1)$. This holds when using one-sided $t$-tests to test a hypothesis concerning a scalar parameter $\theta$: $H_0: \theta=\theta_0$ against $H_1: \theta > \theta_0.$ Let $\sqrt{N}\left(\hat\theta - \theta \right) \sim \mathcal{N}\left(0,\sigma^2\right)$, where $\hat\theta$ is an estimator of $\theta$ based on $N$ observations and $\sigma^2$ is assumed to be known. Denote the usual $t$-statistic as $\hat{t}$ and set $T=\hat{t}$. Defining $h:=\sqrt{N}\left((\theta-\theta_0)/\sigma\right)$ this fits \eqref{eq:one_sided_t}. More generally, testing problems with limiting normal experiments employed to test hypotheses of the form \eqref{eq:one_sided_t} are common in empirical work (e.g., a one-sided test of a regression parameter using normal critical values).  

The chosen null distribution is the standard normal distribution, $F=\Phi$. A level $p$ test rejects the null hypothesis when $T$ is larger than $cv_1(p):=\Phi^{-1}\left(1-p \right)$. Note that $cv_1(p)\ge 0$ for $p\in\left(0,1/2\right]$. Then $\beta\left(p,h \right) = 1-\Phi \left(cv_1(p)-h  \right)$ and the CDF of $p$-values is  
\begin{equation}
G_1(p)=1-\int_{{[0,\infty)}} \Phi\left(cv_1(p)-h  \right)d\Pi(h). \label{eq: CDF one-sided t-test}
\end{equation}

We also consider the two-sided version of this test. Here the hypothesis is
\begin{eqnarray}
H_0: h=0 \qquad \text{against} \qquad  H_1: h\ne 0 \label{eq:two_sided_t}
\end{eqnarray}
with $\mathcal{H}_0=\{0\}$ and $\mathcal{H}_1= \mathbb{R}\backslash \{0\}$. The two-sided test statistic $T$ is assumed to have a folded normal distribution. This holds when using a two-sided $t$-test with $T=|\hat{t}|$ for testing a two-sided hypothesis about $\theta$ : $H_0: \theta=\theta_0$ against $H_1: \theta \ne  \theta_0$.
More generally, testing problems with limiting normal experiments employed to test hypotheses of the form \eqref{eq:two_sided_t} are also common in empirical work.     

The chosen null distribution is the half normal distribution with scale parameter $1$. A level $p$ test rejects the null hypothesis when $T$ is larger than  $cv_2(p):=\Phi^{-1}\left(1-\frac{p}{2}\right)$. The CDF of the $p$-values is 
\begin{eqnarray}
G_2(p)
=2-\int_{\mathbb{R}} \left[ \Phi \left(cv_2(p)-h \right)+\Phi \left(cv_2(p)+h \right) \right] d\Pi(h). \label{eq: CDF two-sided t-test}
\end{eqnarray}

In addition to the results of Section \ref{sec:p_curve_absence}, previously unknown testable restrictions for $p$-curves based on $t$-tests follow from the shape of the power functions for these tests. These additional restrictions enable us to better pin down the space of potential $p$-curves when there is no $p$-hacking, allowing us to construct more powerful statistical tests for $p$-hacking. They also enable distinguishing non-increasing $p$-curves, which can arise from certain types of $p$-hacking, from curves where there is no $p$-hacking.

The $p$-curve based on one-sided $t$-tests testing hypothesis \eqref{eq: CDF one-sided t-test} is 
\begin{equation}
g_1(p) = \int_{{[0,\infty)}}\exp\left(hcv_1(p)-\frac{h^2}{2}\right)d\Pi(h).\label{eq: p-curve one-sided t-test}
\end{equation}
For two-sided $t$-tests testing hypothesis \eqref{eq: CDF two-sided t-test}, the $p$-curve is
\begin{equation}
g_2(p) = \int_{\mathbb{R}}\frac{1}{2}\left[\exp\left(hcv_2(p)-\frac{h^2}{2}\right)+\exp\left(-hcv_2(p)-\frac{h^2}{2}\right)\right]d\Pi(h).\label{eq: p-curve two-sided t-test}
\end{equation}
Our next theorem shows that the $p$-curves \eqref{eq: p-curve one-sided t-test} and \eqref{eq: p-curve two-sided t-test} are completely monotone. A function $\xi$ is completely monotone on an interval $\mathcal{I}$ if $0\le (-1)^{k}\xi^{(k)}(x)$ for every $x\in \mathcal{I}$ and all $k=0,1,2,\dots$, where $\xi^{(k)}$ is the k$^{th}$ derivative of $\xi$. 

\begin{theorem}[Complete monotonicity]\label{thm: complete monotonicity} (i) The $p$-curve $g_1$ is completely monotone on $\left(0,1/2\right]$. (ii)
The $p$-curve $g_2$ is completely monotone on $(0,1)$.
\end{theorem}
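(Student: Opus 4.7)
My strategy is to prove complete monotonicity of each $p$-curve by reducing it, via a change of variable, to \emph{absolute} monotonicity of a composition $\psi \circ \Phi^{-1}$, and then to show that both factors are absolutely monotone and invoke the closure of absolute monotonicity under composition (via Fa\`a di Bruno's formula). For part (i), I set $q := 1 - p$, so $cv_1(p) = \Phi^{-1}(q)$ and $p \in (0, 1/2]$ corresponds to $q \in [1/2, 1)$. Writing $\tilde g_1(q) := g_1(1-q)$, the Jacobian $dq/dp = -1$ gives $(-1)^k g_1^{(k)}(p) = \tilde g_1^{(k)}(q)$, so complete monotonicity of $g_1$ on $(0, 1/2]$ is equivalent to absolute monotonicity of $\tilde g_1$ on $[1/2, 1)$. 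Factor $\tilde g_1 = \psi_1 \circ \Phi^{-1}$ with $\psi_1(x) := \int_{[0,\infty)} e^{hx - h^2/2}\, d\Pi(h)$; differentiating under the integral yields $\psi_1^{(k)}(x) = \int_0^\infty h^k e^{hx - h^2/2}\, d\Pi(h) \geq 0$, so $\psi_1$ is absolutely monotone on $\mathbb{R}$.

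The technical core is to show that $\Phi^{-1}$ is absolutely monotone on $[1/2, 1)$. Setting $y(q) := \Phi^{-1}(q)$, the identities $y'(q) = 1/\phi(y(q))$ and $\phi'(y) = -y\phi(y)$ yield the ODE $y'' = y(y')^2$. By induction on $n$ I would establish that $y^{(n)}(q) = P_n(y(q))\,(y'(q))^n$, where $P_1 = 1$ and $P_{n+1}(y) = P_n'(y) + nyP_n(y)$; since both differentiation and multiplication by $y$ preserve nonnegativity of polynomial coefficients, every $P_n$ has nonnegative coefficients. Because $y(q) \geq 0$ on $[1/2, 1)$ and $y'(q) > 0$ throughout, this gives $y^{(n)}(q) \geq 0$ for all $n$. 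Fa\`a di Bruno's formula $(\psi_1 \circ \Phi^{-1})^{(n)} = \sum_\pi \psi_1^{(|\pi|)}(\Phi^{-1})\,\prod_{B \in \pi} (\Phi^{-1})^{(|B|)}$ is then a sum of nonnegative terms, which proves part (i).

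Part (ii) follows the same template under the substitution $\tilde q := 1 - p/2 \in (1/2, 1)$ for $p \in (0, 1)$, so that $cv_2(p) = \Phi^{-1}(\tilde q)$. Using $\tfrac{1}{2}(e^{hx} + e^{-hx}) = \cosh(hx)$, write $g_2(p) = \psi_2(\Phi^{-1}(\tilde q))$ with $\psi_2(x) := \int_{\mathbb{R}} \cosh(hx)\, e^{-h^2/2}\, d\Pi(h)$. Its even-order derivatives are $\int h^{2j}\cosh(hx) e^{-h^2/2}\, d\Pi(h) \geq 0$, and its odd-order derivatives are $\int h^{2j+1}\sinh(hx) e^{-h^2/2}\, d\Pi(h) \geq 0$ for $x \geq 0$, since $\sinh(hx)$ shares the sign of $h$ when $x \geq 0$; hence $\psi_2$ is absolutely monotone on $[0, \infty)$. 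Since $\Phi^{-1}(\tilde q) \in [0, \infty)$ on $[1/2, 1)$, the composition argument from part (i) applies, and the Jacobian $d\tilde q/dp = -1/2$ produces $(-1)^k g_2^{(k)}(p) = (1/2)^k (\psi_2 \circ \Phi^{-1})^{(k)}(\tilde q) \geq 0$ on $(0, 1)$.

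The main obstacle is the absolute monotonicity of $\Phi^{-1}$: the induction on the recursion $P_{n+1} = P_n' + nyP_n$ is the only step that does not reduce to one-line bookkeeping. Everything else---nonnegativity of the $\psi_i^{(k)}$, the Jacobian computations, and closure of absolute monotonicity under composition---is routine once Fa\`a di Bruno's formula is in hand. The restriction to $p \in (0, 1/2]$ in part (i) (as opposed to the full interval $(0, 1)$ for $g_2$) enters exactly because we need $y(q) = \Phi^{-1}(q) \geq 0$ in order for $P_n(y(q))$ to be nonnegative.
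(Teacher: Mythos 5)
Your proof is correct, and it is organized quite differently from the paper's. The paper proves a single auxiliary lemma computing the $k$-th $p$-derivative of $\Psi(cv_s(p),h)=\exp\{h\,cv_s(p)\}$ by induction, showing it equals $(-1)^k$ times $h\sum_j A^k_j(cv_s(p))[cv_s(p)+h]^j\,\Psi(cv_s(p),h)\big/\bigl(s^k\phi(cv_s(p))^k\bigr)$ with the $A^k_j$ polynomials having non-negative coefficients; complete monotonicity then follows termwise (for the two-sided case via the pairing inequality $h\{[cv_2(p)+h]^j\Psi(cv_2(p),h)-[cv_2(p)-h]^j\Psi(cv_2(p),-h)\}\ge 0$). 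You instead modularize: a change of variables converts complete monotonicity in $p$ to absolute monotonicity in $q$, you factor the curve as $\psi_s\circ\Phi^{-1}$, prove absolute monotonicity of $\Phi^{-1}$ on $[1/2,1)$ as a self-contained lemma via the ODE $y''=y(y')^2$ and the recursion $P_{n+1}=P_n'+nyP_n$, and close with Fa\`a di Bruno. Your recursion is the structural core of the paper's induction in disguise (their $1/\phi(cv)^k$ factor is your $(y')^k$, and their non-negative-coefficient bookkeeping is your $P_n$), but your decomposition is cleaner and more transparent about where each hypothesis enters: $h\ge 0$ (or the $\cosh$/$\sinh$ symmetry for the two-sided case) makes $\psi_s$ absolutely monotone, and $cv_s(p)\ge 0$ makes $\Phi^{-1}$ absolutely monotone on the relevant range, which is exactly why part (i) is restricted to $p\le 1/2$. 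Your use of $\cosh$ also absorbs the paper's two-sided pairing inequality into a one-line sign check on $h^{2j+1}\sinh(hx)$. What the paper's less modular route buys is the explicit formula for $\Psi^{(k)}$, which is reused verbatim to construct the computable upper bounds $\mathcal{B}^{(k)}_s$ in Theorem 3; your argument establishes complete monotonicity but would need to be supplemented to recover those bounds. One small point you share with the paper: both arguments differentiate under the integral over $\Pi$ without justification, which is harmless here given the Gaussian factor $e^{-h^2/2}$ but worth a sentence.
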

Complete monotonicity yields additional restrictions that
can be exploited to improve the power of statistical tests for $p$-hacking. Whilst available for one- and two-sided $t$-tests, not all tests yield completely monotonic $p$-curves. For example, a direct calculation shows that complete monotonicity may fail for tests based on $\chi^2$ distributions with more than two degrees of freedom (e.g., Wald tests).

The next theorem presents additional testable restrictions in the form of upper bounds on the $p$-curves and their derivatives.

\begin{theorem}[Upper bounds]\label{thm: bounds}
\text{ }
\begin{enumerate}\setlength\itemsep{-1.5pt}
\item[(i)] The $p$-curves $g_1$ and $g_2$ are bounded from above:
\begin{eqnarray}
g_1(p)&\le& 1_{\{p\le1/2\}}\exp\left(\frac{cv_1(p)^2}{2}\right)+1_{\{p>1/2\}}=:\mathcal{B}^{(0)}_1(p),\label{eq:one_sided_bound}\\
  g_2(p)&\le&1_{\{p<2(1-\Phi(1))\}}\tilde{\mathcal{B}}^{(0)}_2+1_{\{p\ge 2(1-\Phi(1))\}}=:\mathcal{B}^{(0)}_2(p),\label{eq:two_sided_bound}
\end{eqnarray}
where
\begin{eqnarray*}
\tilde{\mathcal{B}}^{(0)}_2(p)&:=&\frac{1}{2}\left[\exp\left(h^*(p)cv_2(p)-\frac{h^{*}(p)^2}{2}\right)+\exp\left(-h^*(p)cv_2(p)-\frac{h^{*}(p)^2}{2}\right)\right]\\
&\le&\exp\left(\frac{cv_2(p)^2}{2}\right),
\end{eqnarray*}
and $h^*(p)$ is the non-zero solution to $$\varphi(cv_2(p), h):=(cv_2(p)-h)\exp(cv_2(p)h)-(cv_2(
p)+h)\exp(-cv_2(p)h)=0.$$
\item[(ii)] The derivatives of $g_1$ and $g_2$ are bounded from above. For $s=1,2$ and $k=1,2,3,\dots$, then $(-1)^kg^{(k)}_{s}(p)\le \mathcal{B}^{(k)}_{s}(p),$
where $\mathcal{B}^{(k)}_{s}$ is defined in Appendix \ref{app: proof bounds}.
\end{enumerate}
\end{theorem}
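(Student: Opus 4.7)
The plan is to leverage the fact that $\Pi$ is a probability measure: both $g_s(p)$ and (after differentiating under the integral) $g_s^{(k)}(p)$ are integrals of explicit exponential integrands against $\Pi$, so it suffices to bound each integrand pointwise in $h$ and take the supremum. Both parts thereby reduce to one-variable calculus in $h$.

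For part (i) applied to $g_1$, I would complete the square in the exponent: $hcv_1(p) - h^2/2 = cv_1(p)^2/2 - (h - cv_1(p))^2/2 \le cv_1(p)^2/2$, with equality at $h = cv_1(p)$. This maximizer lies in $[0,\infty)$ exactly when $cv_1(p) \ge 0$, i.e.\ when $p \le 1/2$, giving the bound $\exp(cv_1(p)^2/2)$. When $p > 1/2$, $cv_1(p) < 0$ and the integrand is strictly decreasing on $[0,\infty)$, so it is maximized at $h = 0$ with value $1$. For $g_2$, write the integrand as $\psi(h,p)$; it is even in $h$, so restrict to $h \ge 0$. The first-order condition $\partial_h \psi = 0$ is precisely $\varphi(cv_2(p),h) = 0$, whose non-zero root is $h^*(p)$. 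A direct calculation gives $\psi''(0,p) = cv_2(p)^2 - 1$, so $h=0$ is a local minimum exactly when $cv_2(p) > 1$, equivalently $p < 2(1-\Phi(1))$; in that regime $h^*(p)$ is the global maximizer and gives $\tilde{\mathcal{B}}^{(0)}_2(p)$. For $p \ge 2(1-\Phi(1))$ the function $\psi(\cdot,p)$ is strictly decreasing on $[0,\infty)$, so $h = 0$ is the sole global maximum with value $1$. The inequality $\tilde{\mathcal{B}}^{(0)}_2(p) \le \exp(cv_2(p)^2/2)$ then follows by completing the square in each summand of $\psi$.

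For part (ii), I would differentiate under the integral (justified by the Gaussian decay of the integrand in $h$, which dominates any polynomial factor from derivatives of $cv_s$) to obtain $g_s^{(k)}(p) = \int \partial_p^k \chi_s(h,p)\, d\Pi(h)$, where $\chi_s$ denotes the integrand of $g_s$. By iterated chain rule the derivative factors as $\partial_p^k \chi_s(h,p) = Q_{s,k}(h,p)\, \chi_s(h,p)$, with $Q_{s,k}$ a polynomial in $h$ of degree $k$ whose coefficients combine $cv_s'(p), cv_s''(p), \ldots$. Since $\chi_s > 0$, the sign of the integrand of $(-1)^k g_s^{(k)}$ is governed by $(-1)^k Q_{s,k}$, and $\mathcal{B}^{(k)}_s(p)$ is obtained by maximizing $(-1)^k Q_{s,k}(h,p)\,\chi_s(h,p)$ over the admissible $h$. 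The main obstacle is that, beyond $k=0$, the maximizer solves a transcendental equation analogous to $\varphi(cv_2(p),h)=0$ with no closed form, so the bound in the Appendix must be defined implicitly through such first-order conditions; a cleaner but looser alternative is to bound each monomial in $h$ appearing in $Q_{s,k}\chi_s$ separately by completing the square, yielding an explicit polynomial in $cv_s(p)$ and its derivatives multiplied by $\exp(cv_s(p)^2/2)$.
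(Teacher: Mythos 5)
Your proposal is correct and follows essentially the same route as the paper: bound the integrand pointwise in $h$ (completing the square for $k=0$, and for the two-sided case analyzing the first-order condition $\varphi(cv_2(p),h)=0$ together with the second-order check $\psi''(0,p)=cv_2(p)^2-1$ to separate the regimes $p<2(1-\Phi(1))$ and $p\ge 2(1-\Phi(1))$), then define $\mathcal{B}^{(k)}_s(p)$ as the supremum over admissible $h$ of the $k$-th $p$-derivative of the integrand, finite by Gaussian decay. The paper's only additional ingredient is an explicit induction (its auxiliary lemma) establishing that $\partial_p^k$ of the integrand equals the integrand times a degree-$k$ polynomial in $h$, which is precisely your factorization $Q_{s,k}(h,p)\,\chi_s(h,p)$.
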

As with the results in Theorem \ref{thm: complete monotonicity}, the results in Theorem \ref{thm: bounds} yield additional restrictions, allowing more powerful tests for $p$-hacking.\footnote{One can use similar arguments as in Theorem \ref{thm: bounds} to derive bounds for $p$-curves based on other specific tests such as Wald tests.} The bounds in Theorem \ref{thm: bounds} do not only rule out large humps around significance cutoffs such as 0.01, 0.05, and 0.1 but also restrict the magnitude of the $p$-curves near zero. For the two-sided test, tests for $p$-hacking can be either constructed using the sharper (but not explicit) bound $\tilde{\mathcal{B}}^{(0)}_{2}(p)$ or the simpler explicit bound $\exp\left(\frac{cv_2(p)^2}{2}\right)$. 

The bounds of Theorem \ref{thm: bounds} are particularly useful when $p$-hacking fails to induce an increasing $p$-curve, a situation where tests based on non-increasingness of the $p$-curve have no power. Intuitively we might suspect this happens when all researchers $p$-hack but this simply shifts mass of the $p$-curve to the left, rather than inducing humps. A concrete example is when researchers run a finite number of $M>1$ independent analyses and report the smallest $p$-value, for example, when engaging in specification search across independent subsamples or data sets. The resulting $p$-curve under $p$-hacking is $g^{p}(p;M) = M(1-G^{np}(p))^{M-1}g^{np}(p)$, where $G^{np}$ and $g^{np}$ are the CDF and density of $p$-values in the absence of $p$-hacking.\footnote{This generalizes the example in \citet{ulrich2015p}, who studied the special case where all null hypotheses are true such that $G(p)=p$.} Note that $g^{p}$ is non-increasing (completely monotone) whenever $g^{np}$ is non-increasing (completely monotone).\footnote{Since the products of completely monotone functions are completely monotone, complete monotonicity of $g^{p}(p;M)$ follows from complete monotonicity of $1-G^{np}(p)$ and $g^{np}(p)$.} Thus, $g^p$ will not violate the testable implications of Theorems \ref{thm:monotonicity}--\ref{thm: complete monotonicity}, so tests based on these restrictions do not have power. However, $g^{p}$ can violate the bounds in Theorem \ref{thm: bounds} whenever $M(1-G^{np}(p))^{M-1}>1$. For example, consider the one-sided case and let $\Pi$ be a half-normal distribution with scale parameter 1. Figure \ref{fig: violation bound} shows that $g^{p}$ violates the upper bound in Theorem \ref{thm: bounds} to an extent that depends on $M$. 

\begin{figure}[H]
 \centering
     \includegraphics[width=0.45\textwidth,trim=0 0.5cm 0 0]{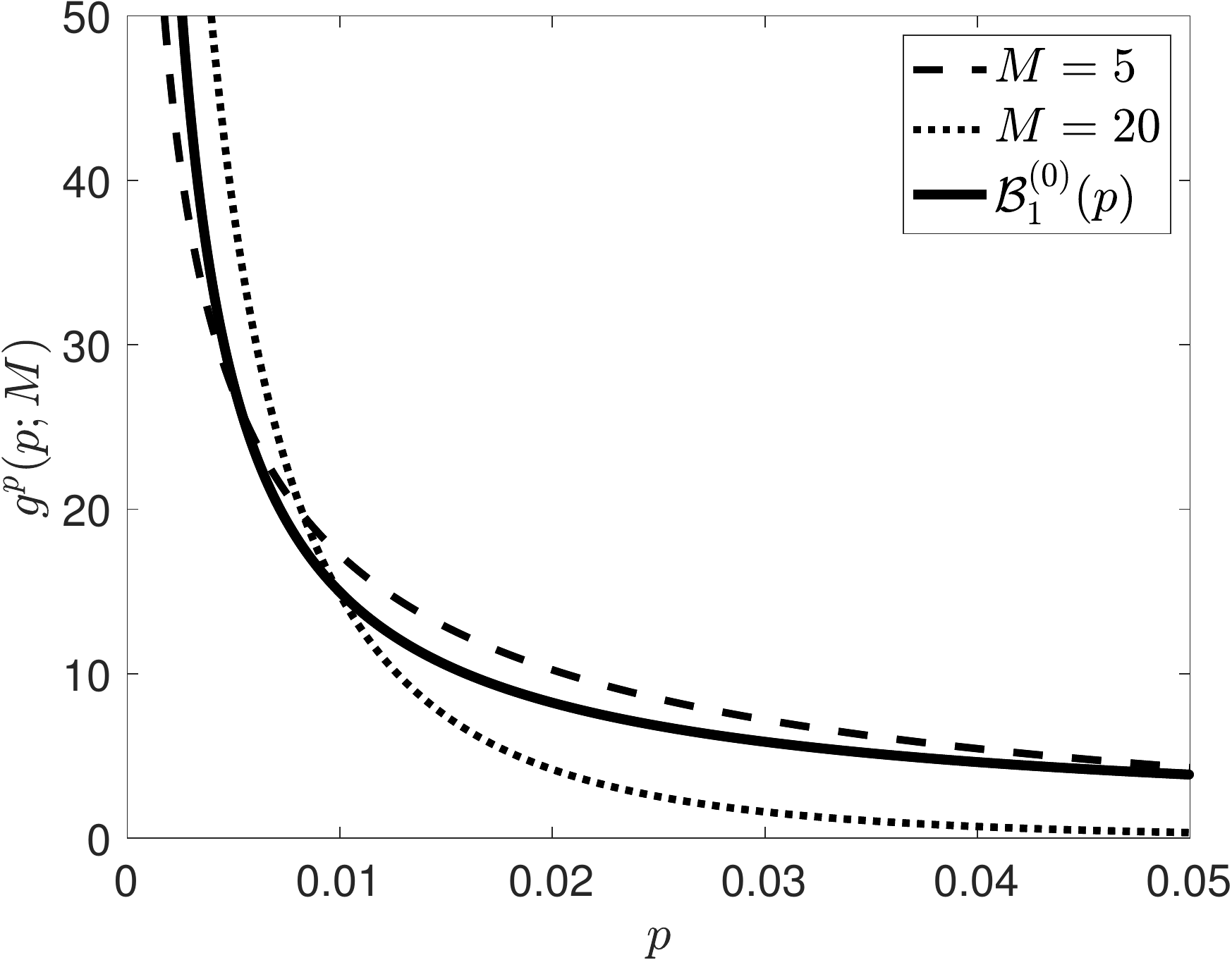}
     \caption{Comparison of the $p$-curve from specification search based on one-sided $t$-tests and the upper bound in Equation \eqref{eq:one_sided_bound}.}

     \label{fig: violation bound}
\end{figure}

Upper bounds also help with testing for $p$-hacking with non-similar tests. In Section \ref{sec:p_curve_absence}, we show that non-increasingness may fail for non-similar one-sided $t$-tests, in which case tests of $p$-hacking based on non-increasingness may well reject because of non-similarity rather than $p$-hacking. Since upper bounds can also be derived for non-similar tests, we can still use bounds on the $p$-curve and its derivatives to test for $p$-hacking.\footnote{For instance, for $p\le 1/2$, the upper bound on the $p$-curve for non-similar one-sided $t$-tests coincides with that in Part (i) of Theorem \ref{thm: bounds}.}

Finally, the characterizations in Theorems \ref{thm: complete monotonicity}--\ref{thm: bounds} imply related characterizations of $p$-curves over subintervals $\mathcal{I}\subset (0,1)$, $g_{s,\mathcal{I}}(p)=g_s(p)/\int_{\mathcal{I}}g_s(p)dp$. In particular, complete monotonicity of $g_s$ implies the complete monotonicity of $g_{s,\mathcal{I}}$, because the sign of $g^{(k)}_{s,\mathcal{I}}$ equals the sign of $g_s^{(k)}$ for $k=0,1,2\dots$. Moreover, (conservative) upper bounds on $g_{s,\mathcal{I}}(p)$ for $\mathcal{I}=(0,\alpha]$ are given by the upper bounds in Theorem \ref{thm: bounds}, re-scaled by $\alpha$ since $G_s(\alpha)\ge \alpha$ for $s=1,2$. 

\section{Statistical tests for p-hacking}
\label{sec:tests}
Here we consider tests for $p$-hacking based on a sample of $n$ $p$-values. We consider three types of tests that differ with respect to the specification of the null hypothesis (the null space of $p$-curves). As a result, the different tests will differ with respect to the violations of the null of no $p$-hacking that they are able to detect.

In the absence of publication bias, our tests are tests for $p$-hacking; when there is also publication bias, they are joint tests for $p$-hacking and publication bias in general.

\subsection{Tests for non-increasingness of the p-curve}
\label{sec: tests non-increasingness}
Theorem \ref{thm:monotonicity} shows that, under general conditions, the $p$-curve is non-increasing. Consider the following testing problem
\begin{equation}
H_0: g ~\text{is non-increasing} \qquad \text{against} \qquad H_1: g ~\text{is not non-increasing}. \label{eq:testing_monotonicity}
\end{equation}
Popular tests based on hypothesis testing problem \eqref{eq:testing_monotonicity} include the Binomial test \citep[e.g.,][]{simonsohn2014p,head2015extent} and Fisher's test \citep{simonsohn2014p}. Here we describe two alternative and more powerful tests.

\medskip

\noindent \textbf{Histogram-based tests.}
Let $0=x_0<x_1<\cdots< x_J=1$ be an equidistant partition of the unit interval. Define the population proportions as $\pi_j := \int_{x_{j-1}}^{x_j}g(p)dp, ~ j=1,\dots, J$. When $g$ is non-increasing, $\Delta_j :=\pi_{j+1} - \pi_j$ is non-positive for all $j=1,\dots,J-1$. Thus, the null hypothesis in testing problem \eqref{eq:testing_monotonicity} can be reformulated as $H_0: \Delta_j\le 0$ for all $j=1,\dots,J-1$.
To test this hypothesis, we apply the conditional chi-squared test of \citet{cox2020simple}. We describe the implementation of this test in Section \ref{sec: tests t-tests} and Appendix \ref{app: additional details}, where we propose more general tests that nest the histogram-based test for non-increasingness.

\medskip

\noindent \textbf{LCM test based on concavity of the CDF of $p$-values.}
Under the null hypothesis \eqref{eq:testing_monotonicity}, the CDF of $p$-values is concave. This observation allows us to apply tests based on the least concave majorant (LCM) \citep[e.g.,][]{carolan2005,beare2015nonparametric,fang2019refinements}. LCM-based tests assess concavity of the CDF based on the distance between the empirical CDF of $p$-values, $\hat{G}$, and its LCM, $\mathcal{M}\hat{G}$, where $\mathcal{M}$ is the LCM operator.\footnote{For a function $f$, the LCM operator is defined as $\mathcal{M}f=\inf \{g: g \text{ is concave and } f\le g\}$ \citep[e.g.,][Definition 2.1]{beare2015nonparametric}.} We consider the test statistic $T = \sqrt{n}\| \mathcal{M}\hat{G} - \hat{G} \|_\infty$. The uniform distribution is least favorable for LCM tests \citep[e.g.,][]{kulikov2008distribution,beare2021least}, in which case $T$ converges weakly to $\| \mathcal{M}B - B \|_\infty$, where $B$ is a standard Brownian Bridge on $[0,1]$. 

\subsection{Tests for continuity}
\label{sec: tests continuity}
Theorem \ref{thm:monotonicity} shows that the $p$-curve is continuous in the absence of $p$-hacking. Tests for continuity of the $p$-curve at significance thresholds $\alpha$ such as $\alpha=0.05$, thus, provide an alternative to the tests based on non-increasingness of the $p$-curve. Consider the following testing problem:
\begin{equation}
H_0:~\lim_{p \uparrow \alpha}g(p)=\lim_{p \downarrow \alpha}g(p)\qquad \text{against} \qquad H_1:~\lim_{p \uparrow \alpha}g(p)\ne \lim_{p \downarrow \alpha}g(p)\label{eq:testing_continuity}
\end{equation}
Testing \eqref{eq:testing_continuity} requires estimating two densities at the boundary point $\alpha$. Traditional kernel density estimators are not suitable for this task because they suffer from boundary bias \citep[e.g.,][]{karunamuni2005boundary}. A popular approach to overcome this problem is to use local linear density estimators that rely on prebinning the data \citep[e.g.,][]{mccrary2008manipulation}. We apply the density discontinuity test of \citet{cattaneo2020simple} with data-driven bandwidth selection \citep{cattaneo2020rddensity}, which is based on boundary adaptive local polynomial density estimators and avoids prebinning. 

\subsection{Tests for K-monotonicity and upper bounds} \label{sec: tests t-tests}

Theorem \ref{thm: complete monotonicity} shows that $p$-curves based on $t$-tests are completely monotone, and Theorem \ref{thm: bounds} establishes upper bounds on the $p$-curves and their derivatives. Here we develop tests based on these testable restrictions. 

We say a function $\xi$ is $K$-monotone on some interval $\mathcal{I}$ if $0\le (-1)^{k}\xi^{(k)}(x)$ for every $x\in \mathcal{I}$ and all $k=0,1,\dots,K$, where $\xi^{(k)}$ is the k$^{th}$ derivative of $\xi$. By definition, a completely monotone function is $K$-monotone. Consider the null hypothesis
\begin{align}
H_0:g_s\text{ is $K$-monotone and } (-1)^kg_s^{(k)}\le \mathcal{B}^{(k)}_s,\text{ for } k=0,1,\dots,K, \label{eq:testing_additional_constraints}
\end{align}
where $s=1$ for one-sided $t$-tests, $s=2$ for two-sided $t$-tests, and $\mathcal{B}^{(k)}_s$ is defined in Theorem \ref{thm: bounds}. 
Hypothesis \eqref{eq:testing_additional_constraints} implies restrictions on the population proportions $\boldsymbol{\pi}:=(\pi_1,\dots,\pi_J)'$, which can be expressed as $H_0: A\boldsymbol{\pi}_{-J}\le b$, where $\boldsymbol{\pi}_{-J} := (\pi_1,\dots, \pi_{J-1})'$.\footnote{The upper bounds on $\boldsymbol{\pi}$ implied by hypothesis \eqref{eq:testing_additional_constraints} are not sharp in general. Sharp bounds can be obtained by directly extremizing the proportions and their differences; see Appendix \ref{app: details cox and shi test 1}.} The matrix $A$ and vector $b$ are defined in Appendix \ref{app: details cox and shi test 2}.\footnote{We use $\boldsymbol{\pi}_{-J}$ because the variance matrix of the estimator of $\boldsymbol{\pi}$ is singular by construction and we want to express the left-hand side of our moment inequalities as a combination of ``core'' moments.}

We estimate $\boldsymbol{\pi}_{-J}$ using the sample proportions $\hat{\boldsymbol{\pi}}_{-J}$.\footnote{Given a sample of $n$ $p$-values, $\{P_i\}_{i=1}^n$, the sample proportions are defined as $\hat\pi_i=\frac{1}{n}\sum_{i=1}^n1\{x_{i-1}< P_i \le x_i\}$, $i=1,\dots,J$.}  This estimator is $\sqrt{n}$-consistent and asymptotically normal with mean $\boldsymbol{\pi}_{-J}$ and non-singular (if all proportions are positive) covariance matrix $\Omega = \text{diag}\{{\pi}_1, \dots, \pi_{J-1}\} - \boldsymbol{\pi}_{-J}\boldsymbol{\pi}_{-J}'$.
Following \citet{cox2020simple}, we test the null by comparing $T = \inf_{q:\: Aq\le b}n (\hat{\boldsymbol{\pi}}_{-J} - q)'\hat{\Omega}^{-1} (\hat{\boldsymbol{\pi}}_{-J} - q) \label{eq: CS_stat}$ to the critical value from a $\chi^2$ distribution with $rank(\hat{A})$ degrees of freedom, where $\hat{A}$ is the matrix formed by the rows of $A$ corresponding to active inequalities.

\section{Empirical applications}
\label{sec:application}

The analyses were done using R \citep{R20} and Stata \citep{stata2019}.

\subsection{P-hacking in economics journals}
\label{sec:brodeur_application}

Here we reanalyze the data collected by \citet{brodeur2016}, which contain information about 50,078 $t$-tests from 641 papers published in the AER, QJE, and JPE 2005--2011 \citep{brodeur2016data}. We convert $t$-statistics into $p$-values associated with two-sided $t$-tests based on the standard normal distribution.\footnote{The original data contain $p$-values for less than 10\% of observations. Where available, we work with the reported $p$-values.} After excluding observations with missing information, there are 49,838 tests from 640 papers.

Because the $p$-values may be correlated within papers, we use cluster-robust estimators of the variance of the sample proportions for the \citet{cox2020simple} tests. In addition, we apply all tests to random subsamples with one $p$-value per paper, allowing us to use exact tests in the presence of within-paper correlation. 
To test for $p$-hacking, we focus on $p$-values smaller than $0.15$. We consider a Binomial test on $[0.04,0.05]$, Fisher's test, a histogram-based test for non-increasingness (CS1), a histogram-based test for $2$-monotonicity and bounds on the $p$-curve and the first two derivatives (CS2B), the LCM test, and a density discontinuity test at 0.05.\footnote{For the Binomial test, we split $[0.04,0.05]$ into two subintervals $[0.04,0.045]$ and $(0.045,0.05]$. Under the null of no $p$-hacking, the fraction of $p$-values in $(0.045,0.05]$ should be smaller than or equal to 0.5, which we assess using an exact Binomial test. For CS1 and CS2B, we use 30 bins when testing based on all $p$-values and 15 bins when testing based on random subsamples of $p$-values.}

Figure \ref{fig:brodeur_015} shows the results before and after de-rounding and based on the full sample and random subsamples. There is a large number of very small $p$-values, which is sometimes interpreted as indicative of evidential value (e.g., \citet{simonsohn2014p}; in our notation, this is a large mass of $\Pi$ away from zero). The data exhibit a noticeable mass point at $\hat{t}=2$ (there are 427 such observations), which translates into a mass point in the $p$-curve at $p=0.046$.\footnote{This mass point could be due to low precision reporting \citep{brodeur2016}, but also due to $p$-hacking, publication bias, or a combination thereof.}
To analyze the impact of rounding, we also apply the tests to the de-rounded data provided by \citet{brodeur2016}.\footnote{The de-rounded data were constructed by randomly redrawing estimates and standard errors; see Section II in \citet{brodeur2016} for a detailed description.} 

\begin{figure}[ht]
\scriptsize
     \centering
     \includegraphics[width=0.4\textwidth,trim=0 0cm 0 0]{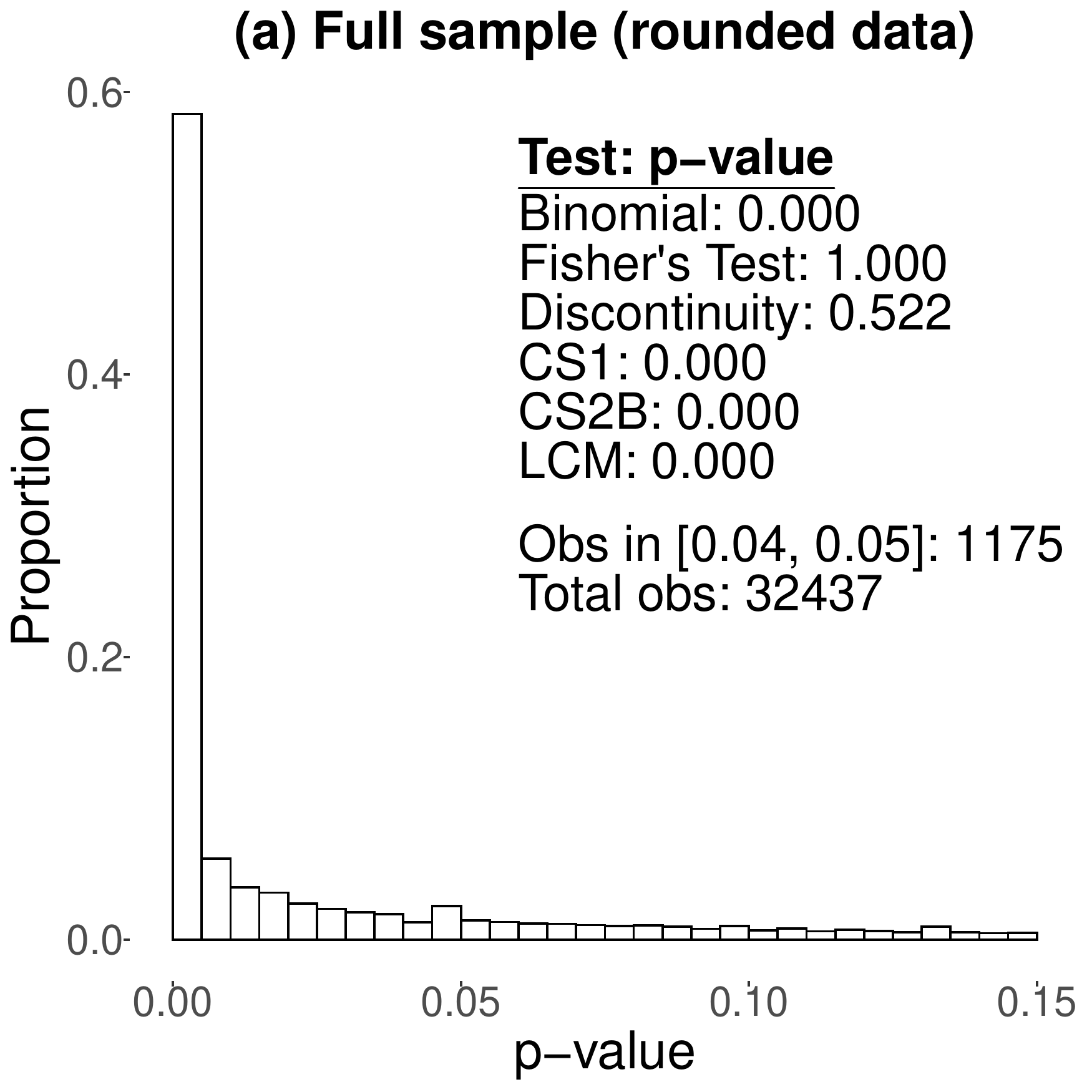}
     \includegraphics[width=0.4\textwidth,trim=0 0cm 0 0]{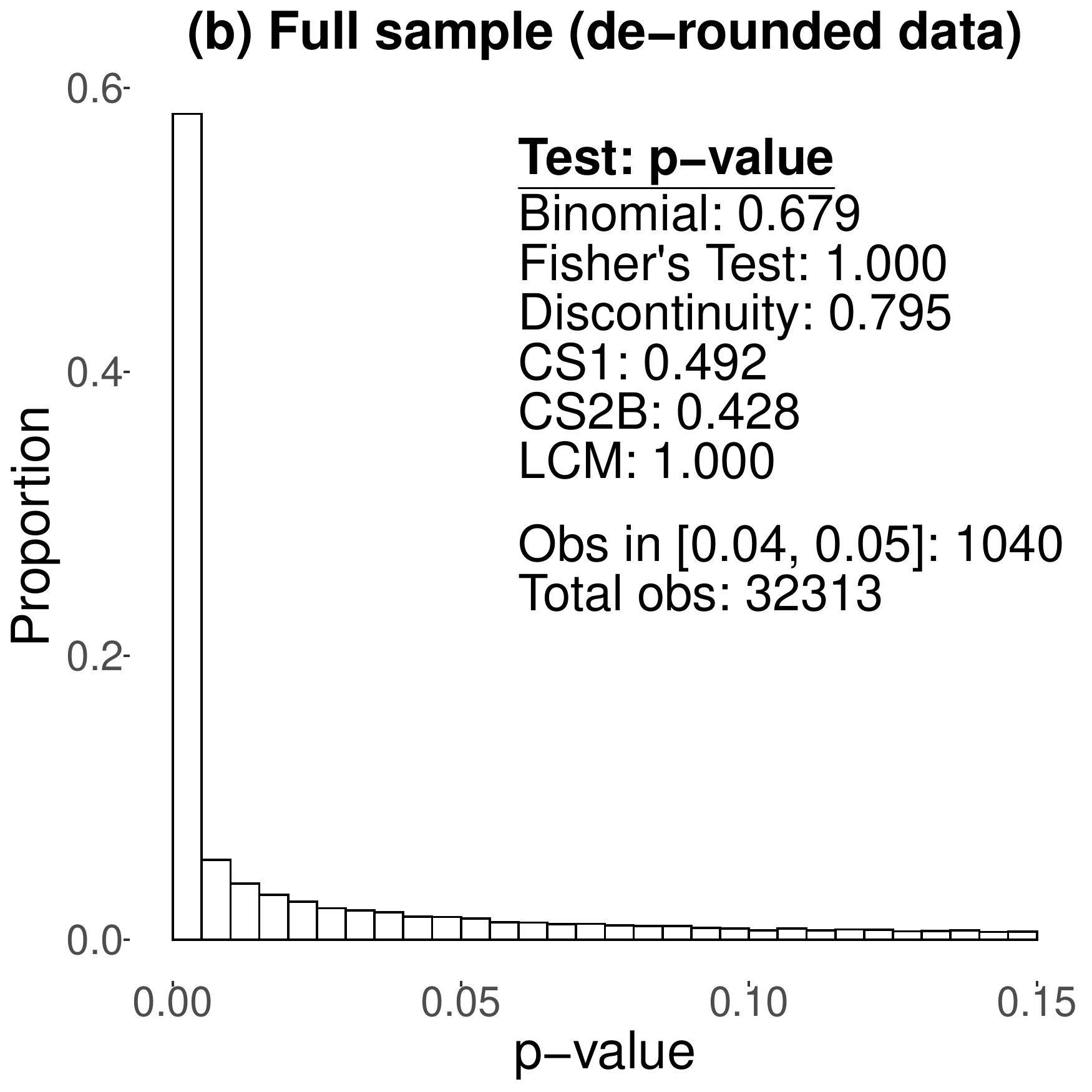}
     \:
    \includegraphics[width=0.4\textwidth,trim=0 0.5cm 0 0]{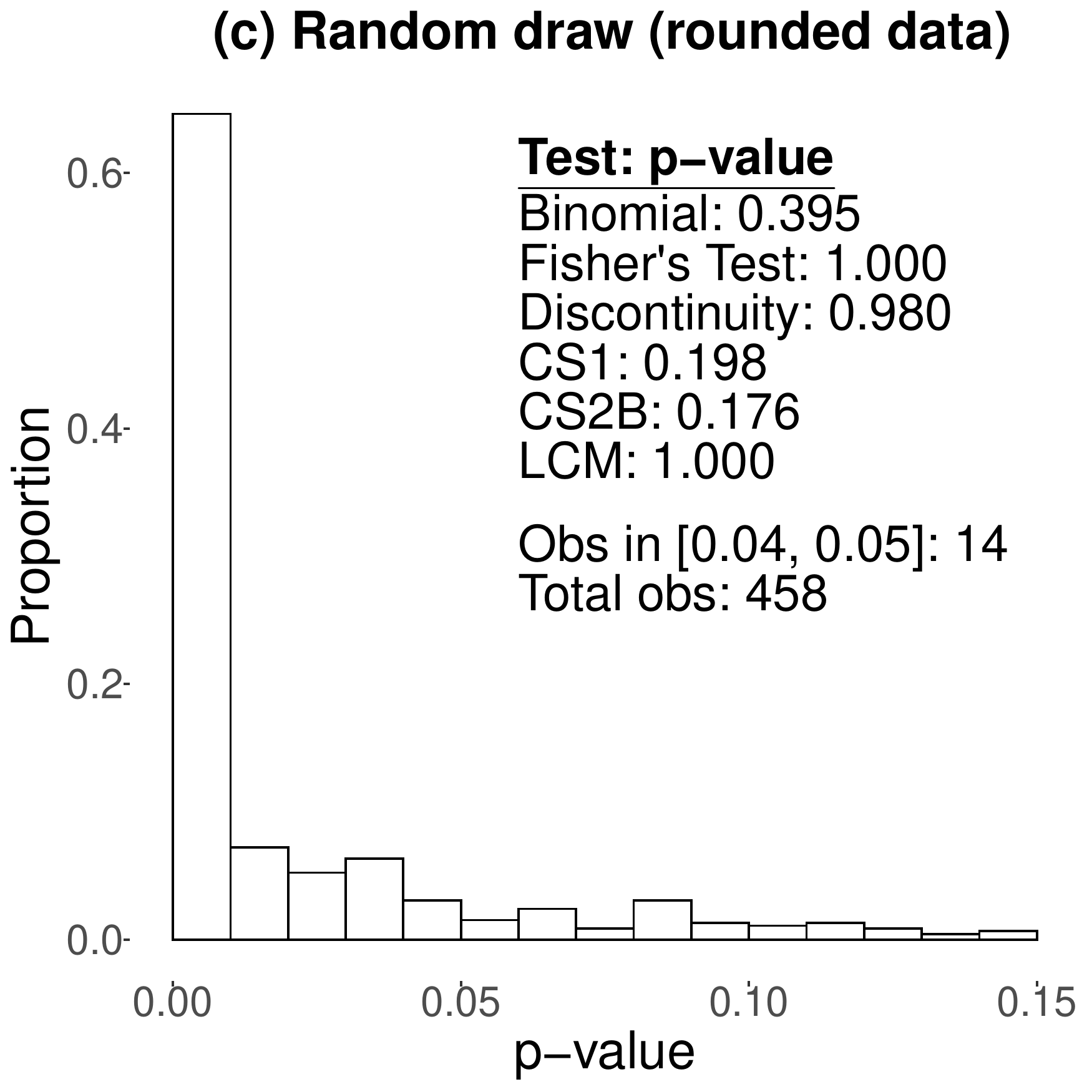}
    \includegraphics[width=0.4\textwidth,trim=0 0.5cm 0 0]{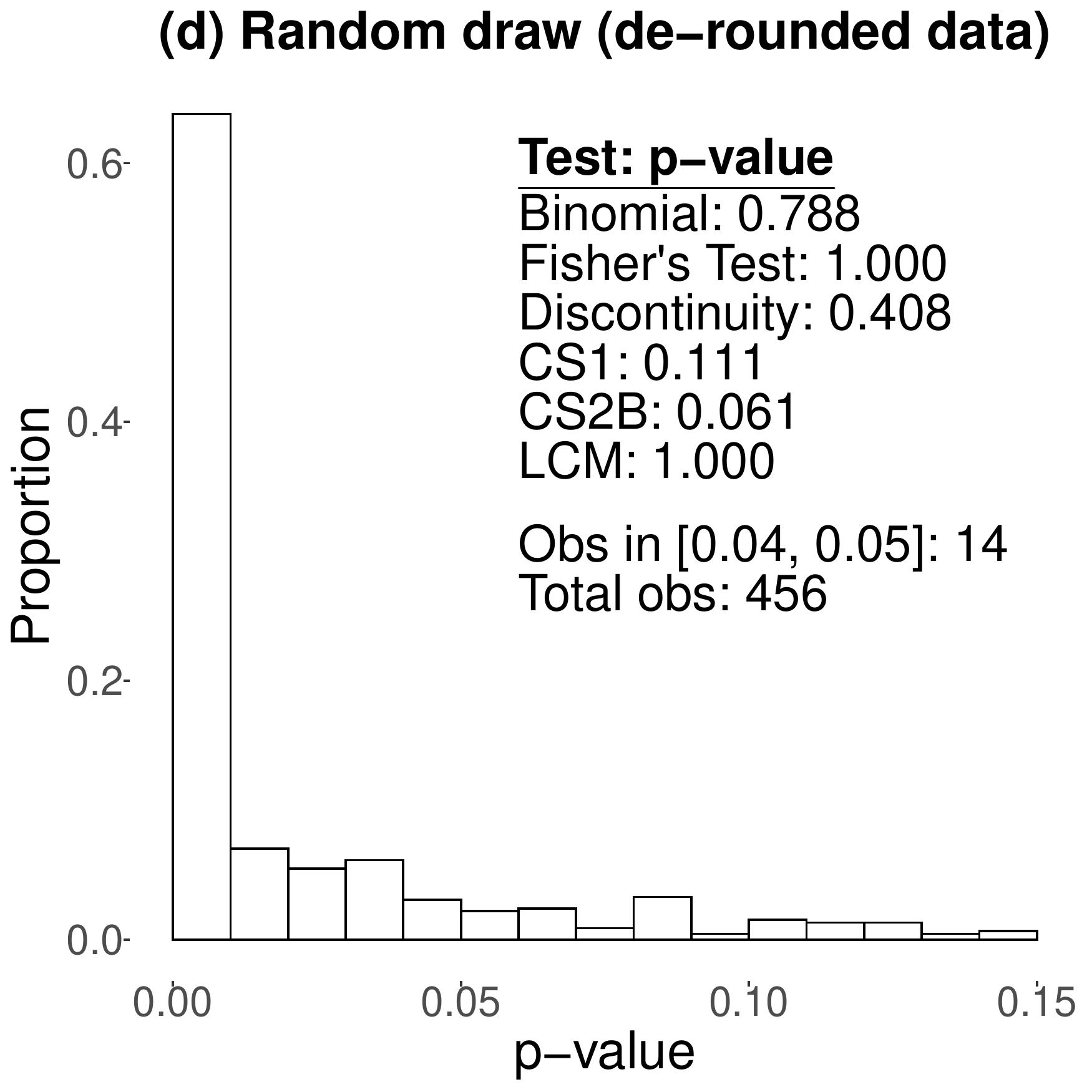}

    \caption{$P$-curves and $p$-values from testing for $p$-hacking. The tests for $p$-hacking are described in Section \ref{sec:tests}. Data: \citet{brodeur2016data}.}
    
\normalsize        
\label{fig:brodeur_015}
\end{figure}

In what follows, we say that a test rejects the null of no $p$-hacking if its $p$-value is smaller than $0.1$. Based on the original raw (rounded) data on all $p$-values, all tests reject the null except Fisher's test and the density discontinuity test. There are no rejections based on the random subsample, suggesting that the tests may be underpowered in small samples.

We find different results based on the de-rounded data.\footnote{Note that the (sub)sample sizes for the rounded and de-rounded data differ due to de-rounding.} There are no rejections based on the full sample of $p$-values. This finding suggests that the rejections based on the raw data are mainly due to the mass point just below $0.05$ and shows that de-rounding may substantially affect empirical conclusions.

Based on the random subsample of de-rounded $p$-values, only the CS2B test rejects the null of no $p$-hacking. The CS1 test comes close to rejecting ($p=0.11$). These two tests yield the smallest $p$-values across all four samples.

\subsection{P-hacking across different disciplines}
\label{sec: head application}

Here we reanalyze the data collected by \citet{head2015extent}, which contain $p$-values obtained from text-mining open access papers in the PubMed database \citep{head2016data}. There are $p$-values from 21 different disciplines. We focus on biology, chemistry, education, engineering, medical and health sciences, and psychology and cognitive science. The data contain $p$-values from the abstracts and the results sections in the main text. We use $p$-values from the results sections, allowing us to work with larger samples and present results for $p$-values smaller than $0.15$.

Since the data do not only contain $t$-tests, we consider tests based on non-increasingness and continuity of the $p$-curve (Theorem \ref{thm:monotonicity}):
a Binomial test on $[0.04,0.05]$, Fisher's test, a histogram-based test for non-increasingness (CS1), the LCM test, and a density discontinuity test at 0.05.\footnote{For CS1, we use 60 bins (all data) and 30 bins (random subsamples) for biological and medical and health sciences given the large sample sizes, and 30 and 15 bins for the other disciplines.} To account for within-paper dependence of $p$-values, we use a cluster-robust variance estimator for the CS1 test, and also present results based on random subsamples with one $p$-value per paper.

\begin{figure}[ht]
\begin{center}
\includegraphics[width=0.4\textwidth,trim=0 1.5cm 0 0]{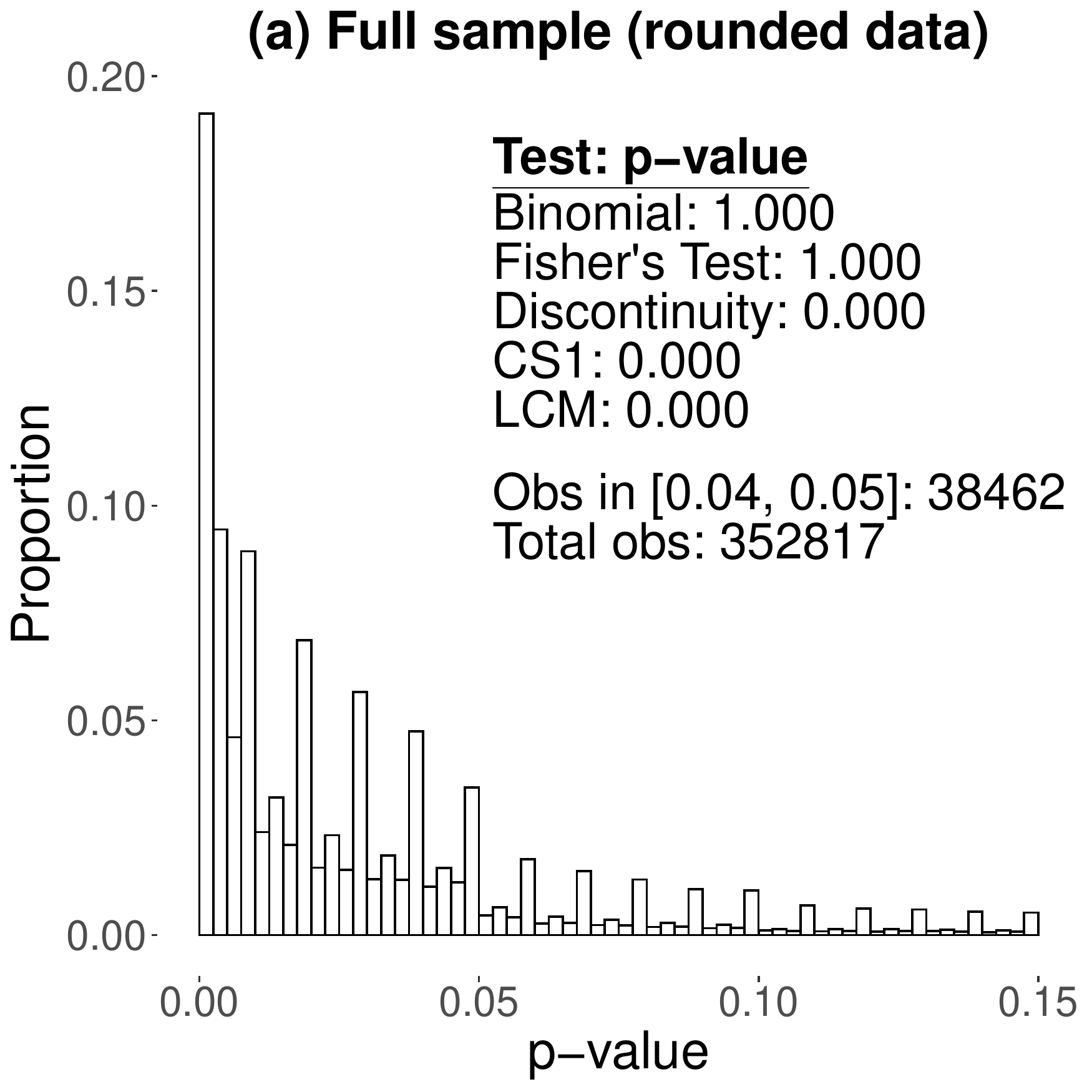}
\includegraphics[width=0.4\textwidth,trim=0 1.5cm 0 0]{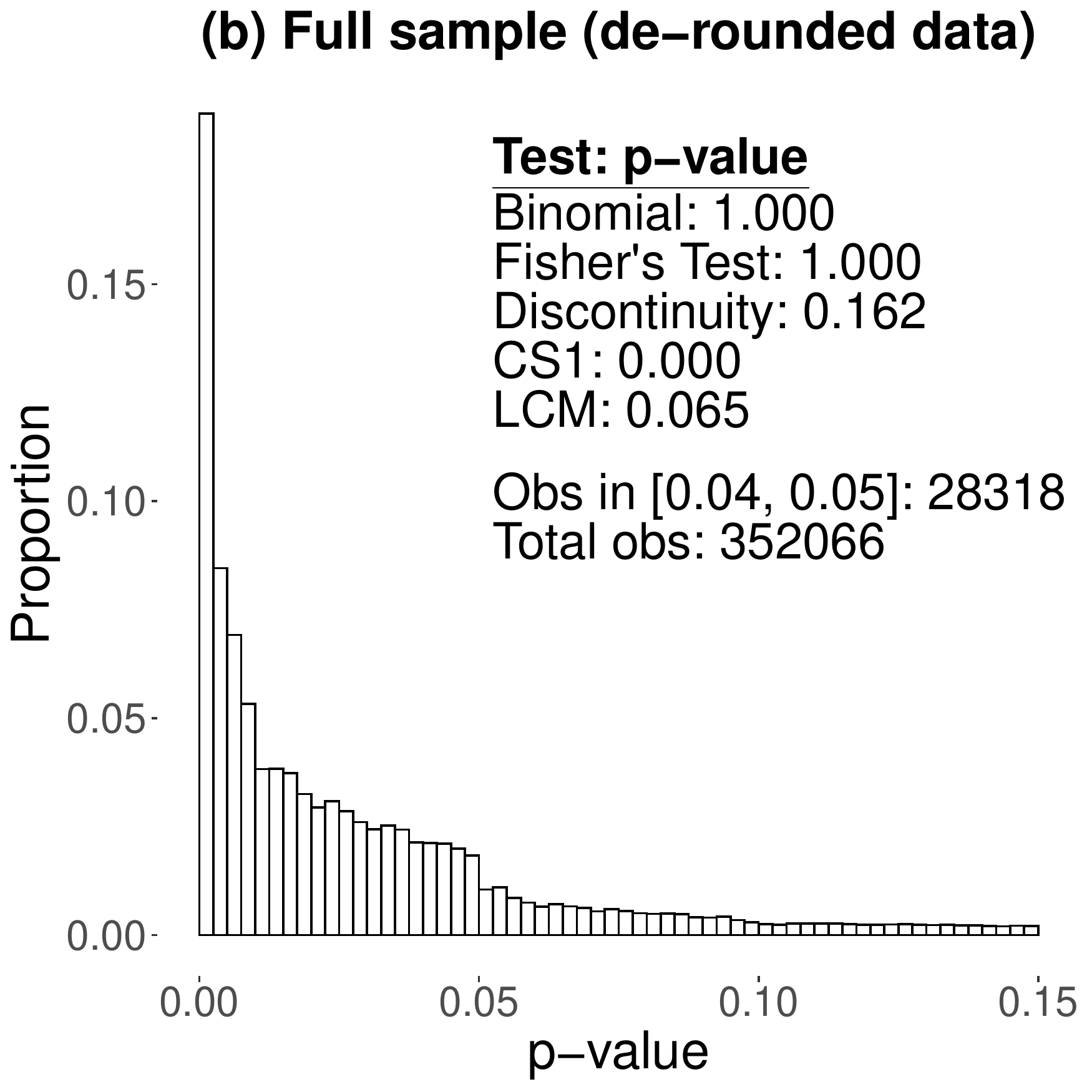}
  \end{center}
  \caption{$P$-curves and $p$-values from testing for $p$-hacking for medical and health sciences. The tests for $p$-hacking are described in Section \ref{sec:tests}. Data: \citet{head2016data}.}
\label{fig:head_med}
\end{figure}

The left panel of Figure \ref{fig:head_med} shows a histogram of the raw data on all $p$-values for the medical and health sciences (the largest subsample). A substantial fraction of $p$-values is rounded to two decimal places, which results in sizable mass points at $0.01, 0.02, \dots, 0.15$. Rounding makes the $p$-curve non-monotonic and discontinuous even in the absence of $p$-hacking and, thus, invalidates the testable restrictions in Theorem \ref{thm:monotonicity}. Therefore, we also show results based on de-rounded data.\footnote{We de-round the data as follows. To each observed $p$-value rounded up to the k$^{th}$ decimal point we add a random number generated from the uniform distribution supported on the interval $[\underline{u}, 0.5] \cdot 10^{-k}$, where $\underline{u}=0$ for zero $p$-values and $\underline{u}=-0.5$ for non-zero $p$-values.} In an earlier version of this paper \citep{elliott2020detecting}, we show that de-rounding restores the non-increasingness but not the continuity of the $p$-curve. The right panel of Figure \ref{fig:head_med} shows the impact of de-rounding on the shape of the $p$-curve. We note that density discontinuity tests are poorly suited here because rounding induces substantial discontinuities, which remain even after de-rounding. This means that rejections of the null can be either due to rounding or due to $p$-hacking. 

In what follows, define a rejection of the null of no $p$-hacking for $p$-values smaller than $0.1$. Table \ref{tab:all_15} presents the results for the full sample of $p$-values. For the original (rounded) data, the CS1 and the LCM test reject the null for all disciplines. De-rounding leads to fewer rejections. The CS1 test only rejects for biological sciences, engineering, and medical and health sciences; the LCM test rejects for medical and health sciences. This shows that rounding and de-rounding can substantially affect empirical results. The Binomial and Fisher's test do not reject the null for any discipline, which demonstrates the importance of using our more powerful tests.

\begin{table}[ht]
\caption{Testing results based on full sample of $p$-values}
\vspace{-0.5cm}

\begin{center}
\scriptsize
\begin{tabular}{@{}ccccccc@{}}\toprule \toprule
 \multirow{2}{*}{Test}&&&\multicolumn{2}{c}{Discipline}&&\\ \cmidrule{2-7}
& \multicolumn{1}{c}{\makecell{Biological \\ sciences}} & \multicolumn{1}{c}{\makecell{Chemical\\ sciences}} & \multicolumn{1}{c}{\makecell{Education}}&\multicolumn{1}{c}{\makecell{Engineering}} &  \multicolumn{1}{c}{\makecell{Medical and \\ health sciences}}  & \multicolumn{1}{c}{\makecell{Psychology and  \\ cognitive sciences}}\\ 
 \midrule

&&&\multicolumn{2}{c}{Rounded}&&\\ \cmidrule{2-7}
 \multicolumn{1}{l}{Binomial}	&	1.000	&	0.342	&	0.975	&	0.999	&	1.000	&	1.000	\\
\multicolumn{1}{l}{Fisher's Test}	&	1.000	&	1.000	&	1.000	&	1.000	&	1.000	&	1.000	\\
\multicolumn{1}{l}{Discontinuity}	&	0.000	&	0.000	&	0.159	&	0.000	&	0.000	&	0.172	\\
\multicolumn{1}{l}{CS1}&	0.000	&	0.000	&	0.000	&	0.000	&	0.000	&	0.000	\\
\multicolumn{1}{l}{LCM}	&	0.000	&	0.000	&	0.000	&	0.000	&	0.000	&	0.000	\\\midrule
\multicolumn{1}{l}{Obs in [0.04, 0.05]}	&	7692	&	296	&	220	&	396	&	38462	&	1621	\\ 
\multicolumn{1}{l}{Total obs}	&	74746	&	2631	&	1993	&	3262	&	352817	&	15189	\\
\midrule
%
&&&\multicolumn{2}{c}{De-rounded}&&\\ \cmidrule{2-7}
\multicolumn{1}{l}{Binomial}	&	0.993	&	0.133	&	0.467	&	0.975	&	1.000	&	0.811	\\
\multicolumn{1}{l}{Fisher's Test}	&	1.000	&	1.000	&	1.000	&	1.000	&	1.000	&	1.000	\\
\multicolumn{1}{l}{Discontinuity}	&	0.005	&	0.117	&	0.245	&	0.849	&	0.162	&	0.406	\\
\multicolumn{1}{l}{CS1}	&	0.028	&	0.530	&	0.884	&	0.084	&	0.000	&	0.836	\\
\multicolumn{1}{l}{LCM}	&	0.936	&	1.000	&	1.000	&	1.000	&	0.065	&	0.653	\\ \midrule
\multicolumn{1}{l}{Obs in [0.04, 0.05]}	&	5720	&	234	&	144	&	250	&	28318	&	1161	\\ 
\multicolumn{1}{l}{Total obs}	&	74550	&	2628	&	1988	&	3258	&	352066	&	15130	\\	\bottomrule \bottomrule
\end{tabular}
\end{center}
\label{tab:all_15}
 \scriptsize{\textit{Notes:} Table reports $p$-values from applying different tests for $p$-hacking based on the full sample of $p$-values for rounded and de-rounded data. The tests for $p$-hacking are described in Section \ref{sec:tests}. Data: \citet{head2016data}.}
\end{table}

Table \ref{tab:rnd_15} shows the results based on random samples with one $p$-value per paper. We find that the CS1 test (biological sciences, engineering, medical and health sciences) and the LCM test (all disciplines except chemical sciences) reject the null based on the rounded data. None of the tests based on non-increasingness rejects the null based on the de-rounded data. A comparison to the results based on all $p$-values shows that the sample sizes required for detecting $p$-hacking may be quite large.

\begin{table}[ht]
\caption{Testing results based on random subsamples of one $p$-value per paper}

\vspace{-0.5cm}
\begin{center}
\scriptsize
\begin{tabular}{@{}ccccccc@{}}\toprule \toprule
 \multirow{2}{*}{Test}&&&\multicolumn{2}{c}{Discipline}&&\\ \cmidrule{2-7}
& \multicolumn{1}{c}{\makecell{Biological \\ sciences}} & \multicolumn{1}{c}{\makecell{Chemical\\ sciences}} & \multicolumn{1}{c}{\makecell{Education}}&\multicolumn{1}{c}{\makecell{Engineering}} &  \multicolumn{1}{c}{\makecell{Medical and \\ health sciences}}  & \multicolumn{1}{c}{\makecell{Psychology and  \\ cognitive sciences}}\\ 
 \midrule

&&&\multicolumn{2}{c}{Rounded}&&\\ \cmidrule{2-7}
 \multicolumn{1}{l}{Binomial}	&	0.510	&	0.157	&	0.439	&	0.904	&	1.000	&	0.670	\\
 \multicolumn{1}{l}{Fisher's Test}	&	1.000	&	1.000	&	1.000	&	1.000	&	1.000	&	1.000	\\
  \multicolumn{1}{l}{Discontinuity}	&	0.113	&	0.083	&	0.103	&	0.000	&	0.000	&	0.157	\\
 \multicolumn{1}{l}{CS1}	&	0.000	&	0.637	&	0.232	&	0.078	&	0.000	&	0.734	\\
 \multicolumn{1}{l}{LCM}	&	0.000	&	0.265	&	0.035	&	0.002	&	0.000	&	0.000	\\ \midrule
 \multicolumn{1}{l}{Obs in [0.04, 0.05]}	&	1482	&	63	&	42	&	85	&	6270	&	185	\\ 
  \multicolumn{1}{l}{Total obs}	&	13829	&	482	&	366	&	619	&	56892	&	1730	\\ \midrule
&&&\multicolumn{2}{c}{De-rounded}&&\\ \cmidrule{2-7}
 \multicolumn{1}{l}{Binomial} 	&	0.178	&	0.116	&	0.286	&	0.712	&	0.976	&	0.465	\\
 \multicolumn{1}{l}{Fisher's Test}	&	1.000	&	1.000	&	1.000	&	1.000	&	1.000	&	1.000	\\
  \multicolumn{1}{l}{Discontinuity}	&	0.571	&	0.085	&	0.997	&	0.287	&	0.557	&	0.637	\\
 \multicolumn{1}{l}{CS1}	&	0.992	&	0.688	&	0.481	&	0.731	&	0.872	&	0.747	\\
 \multicolumn{1}{l}{LCM}	&	1.000	&	1.000	&	1.000	&	0.999	&	0.846	&	1.000	\\ \midrule
 \multicolumn{1}{l}{Obs in [0.04, 0.05]}	&	1053	&	45	&	28	&	51	&	4536	&	128	\\ 
 \multicolumn{1}{l}{Total obs}	&	13788	&	482	&	365	&	619	&	56753	&	1716	\\	\bottomrule \bottomrule
\end{tabular}
\end{center}
\scriptsize{\textit{Notes:} Table reports $p$-values from applying different tests for $p$-hacking based on random subsamples of $p$-values for rounded and de-rounded data. The tests for $p$-hacking are described in Section \ref{sec:tests}. Data: \citet{head2016data}.}
 \label{tab:rnd_15}
\end{table}

Finally, the density discontinuity test rejects for at least three disciplines based on the full sample and the random subsamples. After de-rounding, it only rejects for biological sciences (full sample) and chemical sciences (random subsample). These rejections are expected because of the prevalence of rounding-induced discontinuities.

\section{Conclusion}
\label{sec:conclusion}

We provide theoretical foundations for testing for $p$-hacking based on the distribution of $p$-values across scientific studies. We establish general results on the $p$-curve, providing conditions under which a null set of $p$-curves can be shown to be non-increasing. For $p$-values based on $t$-tests, we derive previously unknown additional restrictions on the $p$-curve when there is no $p$-hacking. These restrictions lead to the suggestion of more powerful tests that can be used to test the absence of $p$-hacking. A reanalysis of two datasets from the literature shows that the new tests based on additional restrictions are useful in testing for $p$-hacking. 

\bibliographystyle{apalike}
\bibliography{references}

\begin{appendix}

\section{Additional details Section \ref{sec: tests t-tests}}
\label{app: additional details}

\subsection{Bounds on proportions and their differences}\label{app: details cox and shi test 1}
The bounds on the proportions and their differences implied by hypothesis \eqref{eq:testing_additional_constraints} are not sharp in general. Here we derive sharp bounds by directly extremizing the proportions and their differences.

For the one-sided $t$-tests, the population proportion, $\pi_j$, can be written as
\footnotesize
\begin{eqnarray*}
\pi_j =\int_{x_{j-1}}^{x_j}g_1(p)dp
&=&\int_{x_{j-1}}^{x_j}\int_{{[0,\infty)}}e^{-h^2/2}e^{hcv_1(p)}d\Pi(h)dp\\
&=&\int_{{[0,\infty)}}\left( \int_{x_{j-1}}^{x_j}e^{-h^2/2}e^{hcv_1(p)}dp\right)d\Pi(h)\\
&=& \int_{{[0,\infty)}}\left( \int_{cv_1(x_{j})}^{cv_1(x_{j-1})}\phi(t-h)dt\right)d\Pi(h)\\
&=& \int_{{[0,\infty)}}\lambda_{1,j}(cv_1, h)d\Pi(h),
\end{eqnarray*}
\normalsize
where $\lambda_{1,j}(cv, h) := \Phi(cv(x_{j-1})-h) - \Phi(cv(x_{j})-h)$. For the two-sided $t$-tests,
$
\pi_j =\int_{x_{j-1}}^{x_j}g_2(p)dp= \int_{\mathbb{R}}\lambda_{2,j}(cv_2, h)d\Pi(h),
$
where $\lambda_{2,j}(cv, h) :=\lambda_{1,j}(cv,h)+\lambda_{1,j}(cv, -h)$.

Since $\lambda_{1,j}(cv_1,h)$, as a function of $h$, attains its maximum at $h_j^\ast=\frac{cv_1(x_{j-1})+cv_1(x_j)}{2}$, for the one-sided $t$-tests $\pi_j \le 2\Phi \left(\frac{cv_1(x_{j-1})-cv_1(x_j)}{2}\right)-1:=\vartheta^{(0)}_{1,j}$. In case of the two-sided $t$-tests, the bound, $\vartheta^{(0)}_{2,j}:=\max_{h\in \mathbb{R}}\lambda_{2,j}(cv_2,h)$, can be calculated numerically.

For the bounds on the k$^{th}$ differences of $\pi$'s, note that, for  $j=1,\dots, J-k$, $\Delta^{k}_j = \sum_{i=0}^{k}(-1)^{i}{\binom{k}{i}}\pi_{k+j-i}$ 
and therefore
\begin{equation*}
    |\Delta^{k}_j|\le \vartheta^{(k)}_{s,j}:=\max_{h\in \mathcal{H}_{(s)}}\left\{\sum_{i=0}^{k}(-1)^{i+k}{\binom{k}{i}}\lambda_{s,k+j-i}(cv_s, h)\right\},\quad j=1,\dots, J-k,
\end{equation*}
where $\mathcal{H}_{(1)} = [0, \infty)$, $\mathcal{H}_{(2)}=\mathbb{R}$, and $s=1$ and $s=2$ for the one- and two-sided $t$-tests, respectively. These bounds can be computed numerically.

\subsection{Null hypothesis}\label{app: details cox and shi test 2}

The null hypothesis formulated in terms of the proportions is
\begin{equation}
H_0:~ 0\le(-1)^k\Delta^k\le \boldsymbol{\vartheta}_s^{(k)},\: \sum_{j=1}^{J}\pi_j=1,~~ \text{for all}~ k=0,\dots,K, \label{eq:H0 Delta}
\end{equation}
where $\Delta^k$ is a $(J-k)\times 1$ vector of k$^{th}$ differences of $\pi$'s, $\Delta^0=\boldsymbol{\pi}$, 
$\boldsymbol{\vartheta_{s}^{(k)}}:=(\vartheta_{s,1}^{(k)}, \dots, {\vartheta}_{s,J-k}^{(k)})'$ is the vector of upper bounds on $|\Delta^k|$ (cf.\ Appendix \ref{app: details cox and shi test 1}), $s=1$ for one-sided tests, and $s=2$ for two-sided tests. The inequalities in \eqref{eq:H0 Delta} are interpreted element-wise.

Let $D_m$ be $(m-1)\times m$ differencing matrix of the following form:
\footnotesize
$$D_{m} := \begin{pmatrix}-1&1&0&\dots&0&0\\
\vdots&\vdots&\vdots&\ddots&\vdots&\vdots\\
0&0&0&\dots&-1&1
 \end{pmatrix}.$$
 \normalsize
In addition, define the $J\times 1$ vector $e_J := (0,\dots, 1)'$, $(J-1)\times 1$ vector $i_{J-1}:=(1,\dots,1)'$, and matrix $F := [-I_{J-1}, i_{J-1}]'$.
Using this notation, we can write $(-1)^{k}\Delta^k = D^k\boldsymbol{\pi},~ k=1,\dots, K$,
where $D^k := (-1)^k D_{J-k+1}\times\cdots\times D_{J}$. Note that the restrictions under the null are equivalent to $\mathcal{D}_K\boldsymbol{\pi}\ge c \text{ and }\boldsymbol{\pi} = e_{J} - F\boldsymbol{\pi}_{-J}$, where $\mathcal{D}_K = [-1, 1]'\otimes[I_{J}, D^{1'}, \dots, D^{K'}]'$ and 
 $c = [\boldsymbol{\vartheta_{s}^{(0)}}',\dots, \boldsymbol{\vartheta_{s}^{(K)}}', 0'_{(K+1)(J-K/2)\times 1}]'$. The symbol $\otimes$ denotes the Kronecker product. We can thus express the null hypothesis \eqref{eq:H0 Delta} as $H_0:~ A\boldsymbol{\pi}_{-J}\le b$, where $A :=\mathcal{D}_KF$ and $b := \mathcal{D}_Ke_{J}-c$.

When testing on a subinterval $(0, \alpha]$, the bounds need to be re-scaled. We use a consistent (under the null) estimator of $G(\alpha)$ to re-scale the bounds. In particular, we use bounds $\vartheta^{(k)}_{s,j}=\vartheta^{(k)}_{s,j}/\hat{G}(\alpha)$, where $\hat{G}(\alpha)$ is the fraction of $p$-values below $\alpha$.

\section{Proofs}

\subsection{Proof of Lemma \ref{lem:verification_sufficient_condition}}
\label{app:lem:verification_sufficient_condition}

Note that for claim (i) $\{cv(p): p\in (0,1)\}=\mathbb{R}$ and for claims (ii) and (iii) $\{cv(p): p\in (0,1)\}=(0,\infty)$.

\medskip

\noindent \textit{Claim (i):} In this case $f(x) = \phi(x)$ and $f_h(x) = \phi(x-h)$. It follows that, for all $h\ge 0$, $f'_h(x)f(x) - f'(x)f_h(x)=h\phi(x)\phi(x-h) \ge 0.$

\medskip

\noindent \textit{Claim (ii):} In this case $f(x) = 2\phi(x)$ and $f_h(x) = \phi(x-h)+\phi(x+h)$, where $x\ge 0$. After taking derivatives and collecting terms we get
\begin{equation*}
f'_h(x)f(x) - f'(x)f_h(x)=2\phi(x)h(\phi(x-h) - \phi(x+h))=2\phi(x)\phi(x+h)h(e^{2xh}-1) \ge 0,
\end{equation*}
because $h(e^{2xh}-1)\ge 0$ for any $h$.

\medskip

\noindent \textit{Claim (iii):} In this case $f(x) := f(x;d) =  \frac{1}{2^{d/2}\Gamma(d/2)}x^{d/2-1}e^{-x/2}$ and
$f_h(x) =\sum_{j=0}^{\infty}\frac{e^{-h/2}(h/2)^j}{j!}f(x; d+2j)$,
where $x>0$. Note that $f'(x;d) = f(x;d) \left((d-2)x^{-1}-1\right)/2$. After taking derivatives and collecting terms we get
\footnotesize
\begin{eqnarray*}
f'_h(x)f(x) - f'(x)f_h(x)&=&\sum_{j=0}^{\infty}\frac{e^{-h/2}(h/2)^j}{2j!}f(x; d+2j)f(x;d)\left[((d+2j-2)x^{-1}-1) - ((d-2)x^{-1}-1)\right]\\
 &=&\sum_{j=0}^{\infty}\frac{e^{-h/2}(h/2)^j}{j!}f(x; d+2j)f(x;d)jx^{-1} \ge 0,
\end{eqnarray*}
\normalsize
since every term in the last sum is non-negative. \qed

\subsection{Proof of Theorem \ref{thm:monotonicity}}
\label{app:thm:monotonicity}
Recall that $\beta(p,h)=1-F_h\left(cv(p)\right)$, where $cv(p)=F^{-1}(1-p)$. Under Assumption \ref{ass:regularity}, 
\begin{eqnarray*}
\frac{\partial^2 \beta(p,h)}{\partial p^2}&=&\frac{f'_{h}(cv(p))cv'(p)f(cv(p))-f'(cv(p))cv'(p)f_{h}(cv(p))}{f(cv(p))^2}\\
&=&\frac{cv'(p)}{f(cv(p))^2}\left[ f'_{h}(cv(p))f(cv(p))-f'(cv(p))f_{h}(cv(p))\right] .
\end{eqnarray*}
Non-increasingness of $g$ now follows by Assumption \ref{ass:sufficient_condition} and because $cv'(p)/f(cv(p))^2\le 0$. Continuous differentiability is implied by Assumption \ref{ass:regularity}. \qed

\subsection{Proofs of Theorems \ref{thm: complete monotonicity} and \ref{thm: bounds}}
\label{app: proof bounds}

Note that the $p$-curves for the one-sided and two-sided $t$-tests are given by
\begin{eqnarray}
g_1(p) &=& \int_{[0,\infty)}\Psi(cv_1(p),h)\exp\{-h^2/2\}d\Pi(h),\label{eq:g_1_K}\\
g_2(p) &=& \frac{1}{2}\int_{\mathbb{R}}(\Psi(cv_2(p),h)+\Psi(cv_2(p),-h))\exp\{-h^2/2\}d\Pi(h)\label{eq:g_2_K}
\end{eqnarray}
where $\Psi(x,y) := \exp\{xy\}$. We start by proving an auxiliary lemma about $\Psi(x,y)$.

\begin{lemma}\label{lem:auxiliary}
For $k\ge1$, the k$^{th}$ derivative of $\Psi(cv_s(p),h)$ is
\begin{equation*}
\Psi^{(k)}(cv_s(p),h) = (-1)^{k}\frac{h \sum_{j=0}^{k-1}A^{k}_j(cv_s(p))[cv_s(p) + h]^{j}}{s^k(\phi(cv_s(p)))^k}\Psi(cv_s(p), h),
\end{equation*}
where coefficients $A^k_j(cv_s(p))$ are polynomials in $cv_s(p)$ with non-negative coefficients and $s=1$ for one-sided and $s=2$ for two-sided $t$-tests.
\end{lemma}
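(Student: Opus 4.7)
My plan is to prove the formula by induction on $k\geq 1$, relying on two chain-rule identities: $cv_s'(p) = -1/(s\phi(cv_s(p)))$, obtained by differentiating the defining relation $\Phi(cv_s(p)) = 1 - p/s$, and the standard $\phi'(x) = -x\phi(x)$. Writing $x := cv_s(p)$ for brevity, these combine into $\frac{d}{dp}\Psi(x,h) = -h\Psi(x,h)/(s\phi(x))$ and $\frac{d}{dp}\phi(x)^{-k} = -kx/(s\phi(x)^{k+1})$, which are the only ingredients needed at every step of the differentiation.

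The base case $k=1$ follows directly from the first identity: $\Psi^{(1)}(x,h) = -h\Psi(x,h)/(s\phi(x))$, matching the claimed form with the single coefficient $A^1_0(x) \equiv 1$, a polynomial in $x$ with non-negative coefficients.

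For the inductive step, I assume the formula holds at level $k$ and differentiate $\Psi^{(k)}$ with respect to $p$. The product rule produces four contributions, from differentiating $A^k_j(x)$, $(x+h)^j$, $\Psi$, and $\phi(x)^{-k}$. After factoring out the common $-\Psi/(s\phi(x)^{k+1})$, the bracketed polynomial takes the form
\[
\sum_{j=0}^{k-1}\Bigl[(A^k_j)'(x)(x+h)^j + jA^k_j(x)(x+h)^{j-1} + (h+kx)A^k_j(x)(x+h)^j\Bigr].
\]
The crucial algebraic step is the identity $h+kx = (x+h) + (k-1)x$, which eliminates the stray $h$ in favor of the factor $(x+h)$ plus a non-negative polynomial in $x$. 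Reindexing and collecting powers of $(x+h)$ then produces
\[
A^{k+1}_j(x) = (A^k_j)'(x) + (k-1)xA^k_j(x) + (j+1)A^k_{j+1}(x) + A^k_{j-1}(x), \qquad j=0,\ldots,k,
\]
with the convention $A^k_{-1} = A^k_k \equiv 0$. Since differentiation of a polynomial with non-negative coefficients and multiplication by $x$ both preserve non-negativity, and we only sum such polynomials, each $A^{k+1}_j$ inherits the required property. Combining with the prefactor $(-1)^k/s^k$ from the inductive hypothesis yields $\Psi^{(k+1)}$ in the claimed form with sign $(-1)^{k+1}$, an additional factor of $1/(s\phi(x))$, and the new coefficients $A^{k+1}_j$.

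I expect the main obstacle to be the bookkeeping in the inductive step: the three sums arising after the split $h+kx = (x+h)+(k-1)x$ have index ranges $j=0,\ldots,k-1$ (first and third pieces), $j=1,\ldots,k-1$ (second piece, since the $j=0$ term vanishes), and $j=1,\ldots,k$ (third piece after reindexing $i=j+1$), so some care is needed to align them into a single sum from $0$ to $k$ and read off the recursion above. Everything else is a routine application of the chain and product rules.
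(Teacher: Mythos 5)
Your proposal is correct and follows essentially the same route as the paper's proof: induction on $k$ using $cv_s'(p)=-1/(s\phi(cv_s(p)))$, with the split $h+kx=(x+h)+(k-1)x$ yielding exactly the paper's recursion $A^{k+1}_j = (A^k_j)' + (k-1)\,cv_s(p)\,A^k_j + (j+1)A^k_{j+1} + A^k_{j-1}$ (the paper writes this as $B^k_j+C^k_j$). No substantive differences.
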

\begin{proof}
By direct computation, the first derivative of $\Psi(cv_s(p),h)$ with respect to $p$ is
$\Psi^{(1)}(cv_s(p),h) = -\frac{h}{s\phi(cv_s(p))}\Psi(cv_s(p), h).$
We use induction to derive the k$^{th}$ derivative of $\Psi(cv_s(p), h)$. Suppose that for $k>1$
\begin{equation*}
\Psi^{(k)}(cv_s(p),h) = (-1)^{k}\frac{h \sum_{j=0}^{k-1}A^{k}_j(cv_s(p))[cv_s(p) + h]^{j}}{s^k(\phi(cv_s(p)))^k}\Psi(cv_s(p), h),
\end{equation*}
where coefficients $A^k_j(cv_s(p))$ are polynomials in $cv_s(p)$ with non-negative coefficients. Define $B^k_0 = (k-1)cv_s(p)A_0^k(cv_s(p))$, $B^{k}_{j} = (k-1)cv_s(p)A^{k}_{j}(cv_s(p))+A^{k}_{j-1}(cv_s(p))$ for $j=1,\dots,k-1$, and $B^{k}_{k} = A^{k}_{k-1}(cv_s(p))$; $C^k_j = \partial A^{k}_{j}(cv_s(p))/\partial cv_s(p) + (j+1)A^k_{j+1}(cv_s(p))$ for $j=0,\dots,k-2$, $C^k_{k-1} = \partial A^{k}_{k-1}(cv_s(p))/\partial cv_s(p)$, and $C^{k}_{k}=0$. Now differentiate $\Psi^{(k)}(cv_s(p),h)$ with respect to $p$ to get

\footnotesize
\begin{eqnarray*}
\Psi^{(k+1)}(cv_s(p),h)&=&(-1)^{k+1}\frac{h^2 \sum_{j=0}^{k-1}A^{k}_j(cv_s(p))[cv_s(p) + h]^{j}}{s^{k+1}(\phi(cv_s(p)))^{k+1}}\Psi(cv_s(p), h)\\
&&+(-1)^{k+1}\frac{(hcv_s(p)k) \sum_{j=0}^{k-1}A^{k}_j(cv_s(p))[cv_s(p) + h]^{j}}{s^{k+1}(\phi(cv_s(p)))^{k+1}}\Psi(cv_s(p), h)\\
&&+(-1)^{k+1}\frac{h \sum_{j=0}^{k-1}(\partial A^{k}_j(cv_s(p))/\partial cv_s(p))[cv_s(p) + h]^{j}}{s^{k+1}(\phi(cv_s(p)))^{k+1}}\Psi(cv_s(p), h)\\
&&+(-1)^{k+1}\frac{h \sum_{j=1}^{k-1} j A^{k}_j(cv_s(p))[cv_s(p) + h]^{j-1}}{s^{k+1}(\phi(cv_s(p)))^{k+1}}\Psi(cv_s(p), h)\\
&=& (-1)^{k+1}\frac{\Psi(cv_s(p), h)}{s^{k+1}(\phi(cv_s(p)))^{k+1}}\left\{h\sum_{j=0}^{k}(B^k_j + C^k_j)[cv_s(p) + h]^{j}\right\}.
\end{eqnarray*}
\normalsize
Since $A^{k}_j(cv_s(p)), j=0,\dots, k-1$ are polynomials with non-negative coefficients, $B^k_j$ and $C^k_j$ are also polynomials with non-negative coefficients for every $j=0,\dots,k$. It follows that
$$\Psi^{(k+1)}(cv_s(p),h) = (-1)^{k+1}\frac{h \sum_{j=0}^{k}A^{k+1}_j(cv_s(p))[cv_s(p) + h]^{j}}{s^{k+1}(\phi(cv_s(p)))^{k+1}}\Psi(cv_s(p), h),$$
where $A^{k+1}_j(cv_s(p)) = B^k_j+C^k_j, j=0,\dots, k$. This completes the induction step.
\end{proof}

Using Lemma \ref{lem:auxiliary}, we now proof Theorem \ref{thm: complete monotonicity} and Theorem \ref{thm: bounds}.

\begin{proof}[Proof of Theorem \ref{thm: complete monotonicity}]
Lemma \ref{lem:auxiliary} and equations \eqref{eq:g_1_K}--\eqref{eq:g_2_K} directly imply that $0\le (-1)^{k}g_1^{(k)}(p)$, for $p\in (0,1/2]$ and $0\le (-1)^{k}g_2^{(k)}(p)$, for $p\in (0,1)$ for $k=1,2,\dots$. The result for the two-sided case follows from the fact that $h\{[cv_2(p)+h]^j\Psi(cv_2(p),h) -[cv_2(p)-h]^j\Psi(cv_2(p),-h)\}\ge0$ for every $j\in \mathbb{N}$ and every $h\in \mathbb{R}$.
\end{proof}

\begin{proof}[Proof of Theorem \ref{thm: bounds}]
Consider first the one-sided $t$-test. Lemma \ref{lem:auxiliary} implies that
\begin{equation*}
    (-1)^{k}g_1^{(k)}(p)\le \mathcal{B}_{1}^{(k)}(p):=\max_{h\ge 0}\left\{ |\Psi^{(k)}(cv_1(p),h)|\exp\{-h^2/2\}\right\},
\end{equation*}
where the inequality holds for every $p\in(0,1)$ and the maximum is finite for every $p\in(0,1)$ since $|\Psi^{(k)}(cv_1(p),h)|\exp\{-h^2/2\}$ is finite for every $h\ge0$ and converges to zero as $h$ goes to infinity. For the upper bound on $g_1(p)$, note that for $p\in (0, 1/2]$, $\max_{h\ge 0}\left\{ |\Psi(cv_1(p),h)|\exp\{-h^2/2\}\right\}=\Psi(cv_1(p),cv_1(p))\exp\{-cv_1^2(p)/2\}=\exp\{cv_1^2(p)/2\}$. For $p>1/2$ and $h\ge 0$, $hcv_1(p)-cv_1^2(p)/2<0$ and hence $g_1(p)\le 1$. 

For two-sided tests, by the above arguments and symmetry, we have
\begin{equation*}
    (-1)^{k}g_2^{(k)}(p)\le \mathcal{B}_2^{(k)}(p):=\max_{h\in \mathbb{R}}\left\{ |\Psi^{(k)}(cv_2(p),h)+\Psi^{(k)}(cv_2(p),-h)|\exp\{-h^2/2\}/2\right\},
\end{equation*}
where the upper bound is finite for every $p\in (0,1)$. 

For the upper bound on $g_2(p)$, one can show that for $p\ge 2(1-\Phi(1))$, the first-order condition for maximizing $|\Psi(cv_2(p),h)+\Psi(cv_2(p),-h)|\exp\{-h^2/2\}/2$ has only one solution, $h_o=0$. By checking second-order conditions we can verify that $0$ is the maximum. For $p< 2(1-\Phi(1))$, $0$ becomes local minimum, and there are two additional non-zero symmetric solutions to the first-order condition that satisfy the second-order condition for a maximum and result in identical values of the objective function.

\end{proof}

\end{appendix}

\end{document}